\documentclass[reqno,11pt]{amsart}
\usepackage[linktocpage]{hyperref}
\usepackage[utf8]{inputenc}
\usepackage{graphicx}
\usepackage{slashed}
\usepackage{amscd}
\usepackage{amssymb}
\usepackage{esint}
\usepackage{enumitem}
\usepackage{comment}
\usepackage{amsmath}
\usepackage{dsfont}
\usepackage[mathscr]{eucal}
\usepackage{xcolor}
\usepackage{pst-all}

\numberwithin{equation}{section}
\allowdisplaybreaks[1]
\definecolor{labelkey}{gray}{.65}

\title[Constraints for Physical Gauge Groups]{Constraints for Physical Gauge Groups \\
Coming from the Causal Action Principle}

\author[F.\ Finster]{Felix Finster}
\address{Fakult\"at f\"ur Mathematik \\ Universit\"at Regensburg \\ D-93040 Regensburg \\ Germany}
\email{finster@ur.de}

\author[N.G.\ Gresnigt]{Niels G. Gresnigt \\ \\ September 2026}
\address{Department of Physics, Xi'an Jiaotong-Liverpool University \\ Suzhou \\ China}
\email{niels.gresnigt@xjtlu.edu.cn}

\newtheorem{Def}{Definition}[section]
\newtheorem{Thm}[Def]{Theorem}
\newtheorem{Prp}[Def]{Proposition}
\newtheorem{Lemma}[Def]{Lemma}
\newtheorem{Remark}[Def]{Remark}
\newtheorem{Corollary}[Def]{Corollary}
\newtheorem{Example}[Def]{Example}

\newcommand{\Thanks}{\vspace*{.5em} \noindent \thanks}
\newcommand{\beq}{\begin{equation}}
\newcommand{\eeq}{\end{equation}}
\newcommand{\Proof}{\begin{proof}}
	\newcommand{\QED}{\end{proof} \noindent}
\newcommand{\QEDrem}{\ \hfill $\Diamond$}

\newcommand{\la}{\langle}
\newcommand{\ra}{\rangle}

\newcommand{\C}{\mathbb{C}}
\newcommand{\R}{\mathbb{R}}
\newcommand{\1}{\mathbb{I}}

\newcommand{\N}{\mathbb{N}}
\newcommand{\Pdd}{\mbox{$\partial$ \hspace{-1.2 em} $/$}}

\newcommand{\V}{\mathcal{V}}

\renewcommand{\Tr}{\text{\rm{Tr}}}

\newcommand{\U}{\text{\rm{U}}}

\newcommand{\SU}{\text{\rm{SU}}}

\renewcommand{\u}{\text{\rm{u}}}
\newcommand{\g}{{\mathfrak{g}}}

\renewcommand{\H}{\mathscr{H}}

\DeclareMathOperator{\Symm}{\mbox{\rm{Symm}}}

\newcommand{\I}{\bb{I}}

\newcommand{\scrM}{\mycal M}

\newcommand{\bitem}{\begin{itemize}[leftmargin=2.5em]}
\newcommand{\eitem}{\end{itemize}}

\newcommand{\G}{{\mathscr{G}}}

\newcommand{\Comm}{\text{\rm{Comm}}}
\newcommand{\Span}{\operatorname{span}}
\newcommand{\intprod}{\mathbin{\lrcorner}}
\newcommand{\extprod}{\wedge}
\newcommand{\pseudo}{\Gamma}

\DeclareFontFamily{OT1}{rsfso}{}
\DeclareFontShape{OT1}{rsfso}{m}{n}{ <-7> rsfso5 <7-10> rsfso7 <10-> rsfso10}{}
\DeclareMathAlphabet{\mycal}{OT1}{rsfso}{m}{n}

\newcommand{\bb}[1]{\mathbb{#1}}
\newcommand{\Cl}{C\ell}

\begin{document}
\maketitle

\begin{abstract}
The constraints for the effective local gauge groups stemming from the causal  action principle for causal fermion systems are reviewed. The constraint which are quadratic in the gauge potentials (coming from the so-called bilinear logarithmic terms) are shown to make a connection between the structures of Lie algebras and Clifford algebras. This connection is worked out systematically starting from general chiral potentials corresponding to the gauge group $\text{\rm{U}}(N) \times \text{\rm{U}}(N)$. The general results are illustrated in various examples.
\end{abstract}

\tableofcontents

\section{Introduction} \label{secintro}
The theory of {\em{causal fermion systems}} is a recent approach to fundamental physics
(for an introduction to the physical background and applications,
we refer the interested reader to the review~\cite{review}, the textbooks~\cite{cfs, intro} or the website~\cite{cfsweblink}). In this approach, the physical equations are formulated
in terms of a novel variational principle in spacetime, the {\em{causal action principle}}.
In~\cite[Chapters~3--5]{cfs} it is shown that, taking the vacuum configuration of the
fermions in the standard model as the starting point, the causal action principle
determines the structure of the interaction in detail, giving the
interactions of the standard model after spontaneous symmetry breaking.
More precisely, the gauge group~$\U(1) \times \SU(2) \times \SU(3)$ is derived, and it is
shown that the corresponding gauge potentials couple to the fermions exactly as
the electroweak and strong potentials in the standard model. In particular,
the~$\SU(2)$ gauge potentials are shown to be left-handed and massive.

These results were obtained by a detailed analysis of the eigenvalues of the operator
product~$xy$ (for details see~\cite[Chapters~3--5]{cfs} or the summary in
Section~\ref{secconstraints}). An important part of this analysis is concerned with
the question whether and to which extent the degeneracies of these eigenvalues
are removed by the gauge potentials and fields.
In order to get a deeper and more systematic understanding of the underlying
mechanisms, it is an important task to study the effective gauge fields for more
general vacuum configurations. The present paper is a first important step towards
this goal. Our main result is to formulate the core of the degeneracy analysis
in~\cite[Chapters~3--5]{cfs} for arbitrary gauge potentials acting on general
vector bundles.
The resulting conditions can be solved explicitly, giving constraints
 for the possible choices of the effective gauge groups. This general
structural result stated in Theorem~\ref{thm:main} will be illustrated in various examples.

Let us be more specific on what we mean by the ``core of the degeneracy analysis''.
Some of the constraints for the gauge potentials derived in~\cite[Chapters~3--5]{cfs} 
can be described simply by linear equations (see Sections~\ref{secchiral} and~\ref{secfield}).
The more difficult and interesting constraint (coming from the so-called
bilinear logarithmic terms; see Section~\ref{secbillog}) is quadratic in the
chiral potentials. Here we focus on this constraint and simplify the problem
by assuming that the mass matrix is invertible and commutes with all chiral potentials.
With this simplification, we disregard a mixing of the generations
(as described in the standard model by the CKM and PMNS matrices; we leave
the extension of our methods and results to include mixing matrices as a project
for future research). After this simplification, the constraint can be formulated
as a general condition relating the structures of Lie algebras and Clifford algebras.
After formulating the constraint in this language, we give a general classification
theorem for Lie algebras satisfying this constraint.

Our methods and results are also of broader interest in the context of programs which seek to derive the gauge symmetry of the standard model from more fundamental structures. In conventional grand unified theories, the standard model gauge group is embedded into a larger group and recovered through spontaneous symmetry breaking~\cite{ross,baez2010algebra}. Possible breaking patterns and particle multiplets are then analyzed using subgroup chains and branching rules. These methods are powerful, but do not by themselves single out a unique low-energy gauge group.

Other approaches attempt to characterize the gauge group more intrinsically. These include non-commutative geometry~\cite{chamseddine2008why}; division-algebraic, octonionic and sedenionic constructions~\cite{singh-octo2,furey2022division,Gresnigt2023}; Clifford-algebraic and spinorial approaches~\cite{stoica2018leptons,gording2020unified,wilson2021problem,gourlay2024algebraic,quinta2025spacetime,Gresnigt2026PLB,krasnov2024geometry}; and constructions based on the exceptional Jordan algebra~\cite{todorov2018deducing,baez2026standard}. Depending on the framework, the gauge group may be identified as an automorphism group, a stabilizer, a commutant, an intersection of distinguished subgroups or through related structural criteria. The methods developed here belong to this broader program by providing a way of selecting admissible effective gauge algebras through intrinsic properties of the Lie algebras and their action on the relevant Clifford subspaces, without choosing a symmetry-breaking chain a priori.

We now introduce and state the core of our mathematical result.
Given~$N>1$, we begin with the Lie algebra~$\u(N) \oplus \u(N)$.
Thus the elements in this Lie algebra are pairs
denoted by~$(A_L, A_R) \in \u(N) \oplus \u(N)$.
The reader interested in the physical context can think of the indices~$L$ and~$R$ 
as labeling the potentials which act on the left- and right-handed components of the
fermions, respectively. But our results clearly hold independently of the physical context.
Next, we let~$\g$ be a Lie subalgebra of~$\u(N) \oplus \u(N)$.
We impose the {\em{Clifford compatibility condition}} which states
that all elements~$(A_L, A_R) \in \g$ of this subalgebra should have
the property that the difference of its two components squares to a multiple of the identity
matrix, i.e.\
\beq \label{liecond}
(A_L - A_R)^2 = c\:\I \qquad \text{with} \qquad c=c(A_L, A_R) \in \C \:.
\eeq
Moreover, we demand that~$\g$ is {\em{maximal}} with respect to this condition,
meaning that there is no proper Lie algebra extension~$\g' \supset \g$
which also satisfies the Clifford compatibility condition~\eqref{liecond} for
all~$(A_L, A_R) \in \g'$.

We note that the Clifford compatibility condition poses quadratic constraints for the
form of all Lie algebra elements. Our main task is to understand what all these
conditions mean, and how the resulting admissible Clifford algebras look like.
The name Clifford compatibility condition is inspired by the fact that~\eqref{liecond}
gives a direct {\em{connection to Clifford algebras}}. Indeed, setting
\[ \mathscr{A} := \text{span} \big\{ A_L - A_R \:\big|\: (A_L, A_R) \in \g \big\} \subset
\Symm(\C^N) \]
(where~$\Symm(\C^N)$ denotes all Hermitian $N \times N$-matrices),
we know that~$w^2 = c \1$ for all~$w \in \mathscr{A}$. Evaluating this equation
for the linear combination~$w = \alpha u + \beta v$ with~$\alpha, \beta \in \R$
and multiplying out,
\[ c \,\1 = (\alpha u + \beta v)^2 = \alpha^2 u^2 + \beta^2 v^2 +
2 \alpha \beta\: \{u, v \} \:, \]
one finds that the anti-commutator of~$u$ and~$v$ is also a multiple of the
identity. This standard argument, sometimes referred to as the {\em{polarization
formula}}, shows that on~$\mathscr{A}$ the Clifford anti-commutation relations hold,
\beq \label{sproddef}
\frac{1}{2}\: \{u, v\} = \la u,v \ra \:,
\eeq
where~$\la .,. \ra$ is a scalar product on~$\mathscr{A}$ (we can take~\eqref{sproddef}
as the definition of this scalar product). Denoting the real Clifford algebra
generated by~$\mathscr{A}$ by~$\Cl(\mathscr{A})$ and identifying~$\mathscr{A}= \Cl^{[1]}(\mathscr{A})$ with the grade one
subspace, we thus obtain a mapping from a Lie algebra to a Clifford algebra,
\[ \g \mapsto \Cl(\mathscr{A}) \:,\qquad (A_L, A_R) \mapsto (A_L - A_R) \in \mathscr{A} = \Cl^{[1]}(\mathscr{A})\:. \]
The existence of this mapping poses interesting constraints connecting the structures
of Lie and Clifford algebras. More specifically, the commutator of the
Lie bracket must be compatible with the anti-commutator of the Clifford algebra,
meaning that for all~$(A_L, A_R), (A_L', A_R') \in \g$,
\[ i [A_L, A'_L] - i [A_R, A'_R] \in \mathscr{A} \:. \]
Our main result gives a complete classification of all Lie algebras
which satisfy all these constraints.
Before stating this result, we need to introduce some more notation.
We denote the commutant of~$\mathscr{A}$ by
\[ \Comm(\mathscr{A}) := \big\{ u \in \Symm(\mathscr{A}) \:\big|\: [u,v]=0 \text{ for all~$v \in \mathscr{A}$} \big\} \:. \]
The {\em{complex}} Clifford algebra generated by~$\mathscr{A}$ is denoted
by~$\bb{C}\ell(\mathscr{A})$. We denote the grade two subspace by~$\bb{C}\ell^{[2]}(\mathscr{A})$
(it is sometimes also referred to as the space of bilinear covariants).
Finally, denoting the dimension of~$\mathscr{A}$ by~$p$,
the {\em{pseudoscalar operator}}~$\omega$ is introduced by
\[ \omega := i^{\lfloor p/2\rfloor}\,\gamma_1\gamma_2\cdots\gamma_p \:. \]
We remark for clarity that, in the case that~$p$ is odd, the pseudoscalar
operator commutes with all vectors in~$\mathscr{A}$, so that
\[ \omega \in \Comm(\mathscr{A}) \qquad \text{if~$p$ is odd}\:. \]
For simplicity, we here state our main theorem only in the case that~$p$ is
even; the statement in the case that~$p$ is a bit more involved and will be given in
Theorem~\ref{thm:main}.
\begin{Thm} \label{thmintro} Let~$\g$ be a maximal Lie algebra satisfying the
Clifford compatibility conditions~\eqref{liecond}. Assume that
the dimension~$p$ of the Clifford algebra~$\mathscr{A}$ is odd.
Then all potentials~$(A_L, A_R)$ in~$\g$ are of the form
\[ A_L = \gamma+ \sigma + C \:,\qquad A_R = -\gamma+\sigma+C \]
with
\[ \gamma\in \mathscr{A},\qquad\sigma\in\bb{C}\ell^{[2]}(\mathscr{A}),\qquad C\in\Comm(\mathscr{A})\:. \]
\end{Thm}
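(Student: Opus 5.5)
The plan is to split each potential into its odd and even parts under $L \leftrightarrow R$. Then I identify the largest candidate Lie algebra with a given space~$\mathscr{A}$ and show that every Clifford compatible~$\g$ sits inside it. For $(A_L,A_R) \in \g$ I set
\[ \gamma := \tfrac12 (A_L - A_R)\:, \qquad S := \tfrac12 (A_L + A_R)\:, \]
so that $A_L = \gamma + S$ and $A_R = -\gamma + S$. By the definition of~$\mathscr{A}$ we have $\gamma \in \mathscr{A}$ automatically. The whole content of the theorem is therefore the claim $S \in \bb{C}\ell^{[2]}(\mathscr{A}) \oplus \Comm(\mathscr{A})$. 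Throughout I take the factors of~$i$ so that all objects are Hermitian, as in the convention~$i[A_L,A_L']-i[A_R,A_R'] \in \mathscr{A}$.

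\emph{Step 1 (the candidate maximal algebra).} Let $\g_{\max}(\mathscr{A})$ be the set of all pairs $(\gamma+\sigma+C,\,-\gamma+\sigma+C)$ with $\gamma\in\mathscr{A}$, $\sigma \in \bb{C}\ell^{[2]}(\mathscr{A})$ and $C \in \Comm(\mathscr{A})$, suitably intersected with $\u(N)\oplus\u(N)$. I check that it is closed under the bracket using the grade rules of the Clifford algebra:
\bitem
\item $[\gamma,\gamma'] \in \bb{C}\ell^{[2]}$;
\item $[\gamma,\sigma'] \in \mathscr{A}$;
\item $[\sigma,\sigma'] \in \bb{C}\ell^{[2]}$;
\item $[C,C'] \in \Comm$;
\item brackets of $C$ with Clifford elements vanish.
\eitem
Expanding $[A_L,A_L']$ and $[A_R,A_R']$, the terms that are odd in $\gamma$ flip sign between $L$ and $R$ and the even ones do not. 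Hence the difference of the bracket is $2([\gamma,\sigma']+[\sigma,\gamma']) \in \mathscr{A}$, and the sum lies in $\bb{C}\ell^{[2]}\oplus\Comm$. Since every difference lies in~$\mathscr{A}$, condition~\eqref{liecond} holds. Once Step~2 gives $\g \subset \g_{\max}(\mathscr{A})$, maximality forces $\g = \g_{\max}(\mathscr{A})$, and the theorem follows.

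\emph{Step 2 (containment).} I use the structure theorem for the complex matrix algebra generated by~$\mathscr{A}$. This gives a decomposition $\text{\rm{Mat}}(N,\C) \cong \bb{C}\ell(\mathscr{A}) \otimes \mathcal{B}$, where $\mathcal{B}$ is the complexified commutant. In the odd case this must be refined by the central element~$\omega$, which is where the two cases of Theorem~\ref{thm:main} differ. In this decomposition I write $S = \sum_k S_k$ with $S_k \in \bb{C}\ell^{[k]}(\mathscr{A})\otimes\mathcal{B}$.

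For $\gamma' \in \mathscr{A}$, the commutator $[S_k,\gamma']$ has components only in grades $k\pm 1$. It vanishes when $k$ is even and $S_k$ has scalar Clifford part, and more generally it is controlled by the parity of~$k$. Closure of~$\g$ together with the Clifford condition gives, for all pairs of elements,
\[ [\gamma, S'] + [S, \gamma'] \in \mathscr{A}\:. \]
Reading this grade by grade and tensor factor by tensor factor, I want to conclude three things:
\bitem
\item $S_k = 0$ for $k \geq 3$;
\item the odd part $S_1$ vanishes;
\item $S_2$ has trivial $\mathcal{B}$-factor, i.e.\ $S_2 \in \bb{C}\ell^{[2]}(\mathscr{A}) \otimes \1$, while $S_0 \in \Comm(\mathscr{A})$.
\eitem

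\emph{Main obstacle.} The difficulty is that the displayed condition couples $S$ with $\gamma'$ and $S'$ with $\gamma$ for two different elements, so it does not immediately constrain $S$ alone. To decouple the two terms I would try the following, in order.

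First, exploit bilinearity: replace $Y$ by $tY$, and replace $X$ by $X + sY$. Taking $X=Y$ gives no information, but the symmetric and antisymmetric parts in $(X,Y)$ separate the terms.

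Second, use iterated brackets $[X,[X,Y]]$. Their $\gamma$-component is again in~$\mathscr{A}$, and applying the squaring condition~\eqref{liecond} to it yields a quadratic constraint in~$S$. Higher-grade or $\mathcal{B}$-nontrivial pieces of~$S$ would produce components that do not anticommute to scalars with~$\mathscr{A}$.

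Third, in the degenerate situation where these arguments do not suffice, use maximality directly. Adjoining to~$\g$ the element $(\gamma',-\gamma')$ for some $\gamma' \in \mathscr{A}$ must either keep~$\g$ compatible or already be contained in it. In either case this isolates $[S,\gamma'] \in \mathscr{A}$ for all $\gamma' \in \mathscr{A}$. That is exactly the statement that $S$ normalizes~$\mathscr{A}$. A normalizer of~$\mathscr{A}$ inside Hermitian matrices acts on~$\mathscr{A}$ by infinitesimal orthogonal transformations, and these are given by $\bb{C}\ell^{[2]}$, up to elements commuting with~$\mathscr{A}$. This gives $S \in \bb{C}\ell^{[2]}(\mathscr{A}) \oplus \Comm(\mathscr{A})$.
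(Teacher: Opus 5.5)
\textbf{The proposal has a genuine gap: Step~2 is never carried out.}

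Your Step~1 is correct. So is the normalizer argument at the end, and for elements with $\gamma=0$ it needs no maximality. For such an element, closure alone gives $i[S,\gamma']\in\mathscr{A}$ for all $\gamma'\in\mathscr{A}$. Hence $i[S,\cdot\,]$ is a skew derivation of $(\mathscr{A},\la\cdot,\cdot\ra)$, implemented by some $\sigma\in\mathfrak{spin}(\mathscr{A})$, and $S-\sigma\in\Comm(\mathscr{A})$. This is a cleaner, representation-independent substitute for the paper's grade-by-grade Corollary~\ref{cor:typeII}.

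The gap is Step~2 for elements with $\gamma\neq0$, which is the whole content of the theorem. None of your three remedies closes it:
\begin{itemize}
\item The condition $i[\gamma,S']-i[\gamma',S]\in\mathscr{A}$ is already bilinear and antisymmetric in the two elements, so the substitutions $tY$ and $X+sY$ produce nothing new.
\item The iterated-bracket idea is not carried out. Applying \eqref{liecond} to a bracket adds nothing to closure, since the odd part of every element of $\g$ lies in $\mathscr{A}$ anyway.
\item The maximality step is circular. You may adjoin $(\gamma',-\gamma')$ only once you know the enlarged algebra is still compatible, and that requires exactly $i[S,\gamma']\in\mathscr{A}$. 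Your dichotomy also drops the third case, in which adjoining destroys compatibility.
\end{itemize}

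The paper decouples the two terms differently. It takes a basis of $\g$ whose odd parts form an orthonormal basis $\gamma_1,\ldots,\gamma_p$ of $\mathscr{A}$. Projecting the closure condition to grades $a\neq1$ gives the symmetry $\gamma_j\intprod A_k^{[r]}=\gamma_k\intprod A_j^{[r]}$ for even $r\geq4$ (dually with $\extprod$ for odd $r\leq p-3$). With $T=\sum_l\gamma_l\extprod A_l^{[r]}$ and $\sum_l\gamma_l\extprod(\gamma_l\intprod X)=rX$, it then shows $A_k^{[r]}=\frac{1}{1-r}\gamma_k\intprod T$. So $\gamma_j\intprod A_k^{[r]}$ is also antisymmetric in $j,k$, hence zero. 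The same trick removes non-scalar commutant coefficients from the bi-vector terms.

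\textbf{The refinement by $\omega$ cannot be done in general.} The statement as given admits counterexamples when $p$ is odd and $\omega\neq\pm\1$. Take $N=3$, $Q=\mathrm{diag}(1,1,-1)$, $O=E_{13}+E_{31}$ (matrix units), and $\g=\R\,(Q+O,\,-Q+O)\oplus\{(S,S): S=\mathrm{diag}(a,b,a),\ a,b\in\R\}$.
\begin{itemize}
\item This $\g$ is an abelian Lie algebra with $A_L-A_R\in\R Q$, so $\mathscr{A}=\R Q$ and $p=1$.
\item $\R Q$ is a maximal Clifford subspace of $\Symm(\C^3)$. Any $u$ with $\{Q,u\},u^2\in\R\1$ is a multiple of $Q$ plus a block-off-diagonal $u'$ with $u'^2\in\R\1$. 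Since $u'^2$ has rank at most one on the $+1$ eigenspace, $u'=0$.
\item Hence $\g$ is maximal. Every compatible extension still has $\mathscr{A}=\R Q$. Closure with $(Q+O,-Q+O)$ then forces each of its elements with $A_L=A_R=S$ to commute first with $Q$ and then with $O$, i.e.\ $S=\mathrm{diag}(a,b,a)$.
\item Nevertheless $\frac12(A_L+A_R)=O\notin\bb{C}\ell^{[2]}(\R Q)\oplus\Comm(\R Q)=\Comm(\R Q)$.
\end{itemize}
This example also hits your dichotomy directly. Here $(Q,-Q)\notin\g$, and adjoining it generates $(O,O)$ and then $(i[Q,O],-i[Q,O])$, whose difference squares to $16(E_{11}+E_{33})$, not a multiple of $\1$.

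The paper's proof has the same blind spot. It writes every $A_k$ as $\sum_\ell C_{k,\ell}\alpha_{k,\ell}$ with $\alpha_{k,\ell}\in\bb{C}\ell(V)$ and $C_{k,\ell}\in\Comm(V)$, which misses the blocks anticommuting with $\omega$; for $p=1$ its Type~I condition is even empty. A correct argument therefore needs an extra hypothesis in the odd case, for instance that $\omega$ acts as $\pm\1$. That hypothesis is what makes a decomposition $\mathrm{Mat}(N,\C)\cong(\text{image of }\bb{C}\ell(\mathscr{A}))\otimes\mathcal{B}$ available.
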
 \noindent
This theorem will be proved in Section~\ref{secmain} (see Theorem~\ref{thm:main}).
We also illustrate this result in various examples and work out applications
(see Section~\ref{secexamples}).

The paper is organized as follows. Section~\ref{secprelim}
provides the necessary background on causal fermion systems
(the Clifford algebra concepts used in the mathematical analysis
are introduced in Section~\ref{secmain} as they are needed).
In Section~\ref{secconstraints}, we review the constraints for the
effective gauge potentials obtained in~\cite[Chapters~3--5]{cfs} in a concise
mathematical way. In Section~\ref{secmain} we make the general connection
to Lie and Clifford structures and prove our main theorem (Theorem~\ref{thm:main}).
Finally, in Section~\ref{secexamples} we work
out and discuss various examples.

\section{Preliminaries on Causal Fermion Systems in Minkowski Space} \label{secprelim}
For brevity, we do not give a general introduction to causal fermion systems
and the causal action principle, referring the reader interested in the
general context to the textbooks~\cite{intro, cfs} or reviews like~\cite{rrev, dice2014}.
Instead, we focus on those structures and results needed for our analysis.
Restricting attention to causal fermion systems describing classical spacetimes
and disregarding gravity, the physical data encoded in the causal fermion system
consists of a family of spinorial wave functions in Minkowski space.
Out of this family, one constructs a bi-distribution~$P(x,y)$, referred to as the
{\em{kernel of the fermionic projector}}. The causal action principle can be formulated
in terms of~$P(x,y)$ and gives many conditions and constraints for the form
of~$P(x,y)$. Translated in terms of the individual wave functions, these conditions
imply that the wave functions must satisfy the Dirac equation coupled to
classical fields, which in turn must satisfy corresponding classical field
equations\footnote{The causal action principle also gives rise to
an interaction via quantum fields (see~\cite{fockdynamics} or the review~\cite{qftlimit}), but this will not be considered here.}.
This procedure is dubbed {\em{continuum limit analysis}}.
As input to this procedure, one must prescribe the vacuum configuration of the
wave functions. As will be described in more detail below, this is done by
building up the kernel of the fermionic projector from Dirac sea configurations,
being solutions of the vacuum Dirac equation.
Once this has been done, the causal action principle determines the
interaction completely. In particular, the continuum limit analysis
tells us which gauge fields appear in the theory and how they couple to
the fermions. We refer to the potentials of these gauge fields as the
{\em{dynamical gauge potentials}}.

We proceed by explaining how to build up vacuum configurations.
Our construction does not only allow for the description of the standard model,
but includes any configuration involving left- and right-handed Dirac spinors.
Thus it is general enough for modeling simple toy models as well as
extensions of the standard model.

\subsubsection{Describing the Vacuum in Minkowski space}
We assume familiarity with the Dirac equation in Minkowski space
(for basics see for example~\cite[Sections~1.1--1.3]{intro} or standard
textbooks like~\cite{bjorken, itzykson/zuber, peskin+schroeder}).
The basic building block for the kernel of the fermionic projector of the
vacuum is the distribution~$P_m(x,y)$ defined by
\[ P^\text{vac}_m(x,y) = \int \frac{d^4k}{(2 \pi)^4}\: (\gamma^j k_j +m)\: \delta(k^2-m^2)\: \Theta(-k^0)\: e^{-ik(x-y)}\:. \]
Here~$x,y \in \scrM = \R^{1,3}$ are spacetime points in Minkowski space. The variable~$k$ denotes the momentum, and~$k(x-y)$ denotes the Minkowski
inner product. Moreover, $\gamma^j$ are the Dirac matrices.
By direct computation, one verifies that~$P^\text{vac}_m(x,y)$ satisfies the
Dirac equation of mass~$m$,
\[ (i \Pdd_x - m)\, P^\text{vac}_m(x,y) = 0 \:. \]
We point out that, due to the Heaviside function~$\Theta(-k^0)$,
the distribution~$P^\text{vac}(x,y)$ is built up of all {\em{negative frequency
solutions}} of the Dirac equation. This implements the physical picture
that, in the vacuum, all one-particle states of negative frequency are occupied
forming the so-called {\em{Dirac sea}}. We refer the reader interested in the
physical background to~\cite[Section~1.5]{intro}.

The kernel of the fermionic projector of the vacuum~$P^\text{vac}(x,y)$
is built up of sums and direct sums of Dirac seas. More precisely,
generalizing the procedure in~\cite{cfs}, we set
\beq \label{Pgen}
P^\text{vac}(x,y) = \bigoplus_{a=1}^N \sum_\beta X_{a \beta}
P_{m_{a \beta}}(x,y) \:.
\eeq
We refer to the direct summands as {\em{sectors}}; their number is denoted
by~$N \in \N$. In each sector, we have a finite sum over Dirac seas
with masses~$m_{a \beta}$. These summands describe the {\em{generations}}
of elementary particles in this sector. In the standard model, the number of
generations is three (describing for example electron, muon and tau leptons).
In general, the number of generations can be any number, and it may also be
different in different sectors. For this reason, we did not specify the limits of the
sum in~\eqref{Pgen}. Finally, the factors~$X_{a \beta}$ are $4 \times 4$-matrices
which describe a chiral asymmetry (implementing for example that neutrinos
are left-handed). They can be chosen from the following list,
\[ X_{a \beta} = \left\{ \begin{array}{ll}
\1 & \text{no chiral asymmetry} \\[0.1em]
\1 + \tau_\text{reg} \chi_L & \text{left-handed asymmetry} \\[0.1em]
\1 + \tau_\text{reg} \chi_R & \text{right-handed asymmetry}\:.
\end{array} \right. \]
Here~$\tau_\text{reg}$ is a real parameter which is
useful for keeping track of chiral asymmetry in composite expressions.
Moreover, the {\em{chiral projectors}}~$\chi_{L\!/\!R}$ are defined by
\[     \chi_L = \frac{1}{2} \left(\1 - \pseudo
 \right) \qquad \text{with} \qquad
 \pseudo = i \gamma^0 \gamma^1 \gamma^2 \gamma^3 \]
 ($\Gamma$ is referred to as the {\em{pseudoscalar matrix}}; in the physics
 literature, it is often denoted by~$\gamma^5$).

\subsubsection{The Chiral Gauge Group and Chiral Gauge Potentials} \label{sechirgauge}
We now simplify the above setting by restricting attention to 
the case of {\em{one generation}} of elementary particles.
This will provide the first step in the systematic analysis of the effective gauge potentials.
We plan to consider the case of several generations in a separate publication. Thus we simplify the general ansatz~\eqref{Pgen} to
\[ P^\text{vac}(x,y) = \bigoplus_{a=1}^N X_a P_{m_a}(x,y) \:. \]

We form the {\em{chiral gauge group}}~$\G$ by all pairs of unitary transformations
acting on the left- and right-handed component of the sectors in the fundamental representation.
Thus
\[ \G := \U(N)_L \times \U(N)_R \:, \]
in the representation
\[ 
(U_L, U_R) \psi = \bigoplus_{a=1}^N \sum_{b=1}^N \Big( (U_L)^a_b\, \chi_L\, \psi_b
+ (U_R)^a_b\, \chi_R\, \psi_b \Big) \:. \]
The {\em{chiral potentials}} are sections in the corresponding Lie algebra, i.e.\
\[ (A_L^j, A_R^j) \in \g := \u(N) \oplus \u(N) \:. \]
The analysis in~\cite[Chapters~3--5]{cfs} gives constraints for the chiral potentials.
This will be summarized in Section~\ref{secconstraints}.

\section{Constraints for the Chiral Gauge Potentials} \label{secconstraints}
\subsection{Constraints from the Chiral Asymmetry} \label{secchiral}
The sectors on which the matrices~$X_L$ and/or~$X_R$ do not act like the identity
involve what we call
a {\em{non-trivial regularization}}. Here we do not need to be specific on what this
regularization looks like. Instead, by ``non-trivial'' we simply mean that no chiral gauge potentials
are allowed which mix the ``non-trivial'' sectors with the other sectors.
In mathematical terms, this means that the allowed chiral gauge potentials generate a Lie subgroup
of~$\G$ denoted by~$\G^\text{alg}$. This subgroup is formed of all unitary transformations which act like the identity on the sectors with non-trivial regularization,
i.e.\
\[ \G^\text{alg} := \big\{ (U_L, U_R) \in \G \:\big|\: U_L (\1-X_L) = (\1-X_L)\text{ and } U_R (\1-X_R) = (\1-X_R) \big\} \:. \]
Hence the corresponding algebraic gauge potentials~$(A_L, A_R)$ have the properties
\beq \label{Aalg}
A_L (\1-X_L) = 0 \qquad \text{and} \qquad A_R (\1-X_R) = 0 \:.
\eeq

\subsection{Constraints from the Bilinear Logarithmic Terms} \label{secbillog}
Following the procedure in~\cite[Section~5.3]{cfs}, we can derive further constraints for the
chiral gauge potentials. We refer to those potentials which satisfy all these constraints as the
{\em{admissible}} chiral gauge potentials. They form the {\em{admissible gauge group}}~$\G^\text{adm}$
(see Definition~\ref{defadm} below).

Compared to~\cite[Chapters~4 and~5]{cfs}, the setting here is considerably
simpler because we consider only one generation of elementary particles.
For clarity, we go through the constructions step by step and adapt the results to our setting.
We focus on the contributions to~$P(x,y)$ that have a logarithmic pole on the light cone.
The logarithmic pole is described in the formalism of the light cone expansion by the distribution~$T^{(1)}$.
The relevant contributions for our considerations are the {\em{logarithmic current term}}
(see~\cite[eq.~(B.2.7)]{cfs})
\[ 
\chi_L P(x,y) \asymp -\frac{1}{6}\: \chi_L \,j_L^i\,\gamma_i\: T^{(1)} \]
(by~$\asymp$ we denote a specific contribution to~$P(x,y)$),
and the {\em{logarithmic mass terms}} (see~\cite[eq.~(4.6.11)--(4.6.13)]{cfs})
\[ 
\chi_L P(x,y) \asymp \frac{1}{2}\: m^2\:Y Y \slashed{A}_L\: T^{(1)} -m^2\,\chi_L \:Y \slashed{A}_R \,Y\: T^{(1)} 
+\frac{1}{2}\:m^2\:\chi_L \:\slashed{A}_L \,Y Y \: T^{(1)} \:. \]
(We always give the formula for the left-handed component; the right-handed component is obtained by the
replacement~$L \leftrightarrow R$).
In order to satisfy the EL equations, the logarithmic poles need to be compensated for by
the {\em{microlocal chiral transformation}}. The corresponding contributions to the kernel of the fermionic
projector can be written as (see~\cite[Proposition~4.4.6]{cfs})
\beq \label{logmcl}
\chi_L P(x,y) \asymp -\chi_L\:\slashed{v}_L\: T^{(1)} \:.
\eeq
Evaluating the above contributions in the EL equations, they are contracted with a factor~$\xi$.
By a suitable choice of the vector field~$v_L$ in the microlocal chiral transformation,
the logarithmic pole of~$P(x,y)$ disappears. The remaining contributions by the current and mass terms
are smooth on the light cone. Comparing them with similar contributions by the Dirac current,
one obtains the Yang-Mills equations of the standard model.

Since we consider only one generation, the chiral gauge potentials by themselves do not give rise to any constraints (in contrast to the arguments in~\cite[\S5.3.2]{cfs}).
But in our setting it is important that the above contributions are modified by the chiral
gauge potentials. Expanding the resulting chiral gauge phases, one gets an expansion in
powers of~$A_L^i \xi_i$ and~$A_R^i \xi_i$. For ease of notation, we use the short
notation~$A_{L\!/\!R}[\xi] := A_{L\!/\!R}^i \xi_i$.
We restrict attention to the first-order term of this expansion. Since the resulting contributions to the Lagrangian
all involve two factors of~$\xi$, they are referred to as the {\em{bilinear logarithmic terms}}.

There are different contributions to the kernel of the fermionic projector which
all give rise to bilinear logarithmic terms.
The first contribution which is of relevance here is the
the term obtained inserting factors~$A_{L\!/\!R}[\xi]$ into the
{\em{logarithmic current term}} (see~\cite[eq.~(4.6.5)]{cfs})
\beq \label{log2j}
\chi_L P(x,y) \asymp
\frac{i}{12} \: \chi_L \left( A_L[\xi] \; j_L^i \gamma_i + j_L^i \gamma_i \;A_L[\xi] \right) \: T^{(1)} \:.
\eeq
Likewise, for the {\em{logarithmic mass terms}}
give rise to the following bilinear logarithmic terms (see~\cite[Lemma~4.6.2]{cfs})
\begin{align}
\frac{1}{2} &\, \Tr \big( i \slashed{\xi}\, \chi_L \, P(x,y) \big) \nonumber \\
&\asymp
\frac{m^2}{8} \Big( A_L[\xi]\: A_L[\xi]\, YY + 2 \,A_L[\xi]\: YY\: A_L[\xi]
+ YY \:A_L[\xi] \:A_L[\xi] \Big)\, T^{(1)} \label{l:mp2} \\
&\quad -\frac{m^2}{2}\: Y \:A_R[\xi] \:A_R[\xi]\: Y \:T^{(1)}\:. \label{l:mp3}
\end{align}
It is not obvious how the gauge phases are to be inserted into the contributions by the 
{\em{microlocal chiral transformation}}~\eqref{logmcl}. A naive computation yields a formula similar to~\eqref{log2j},
but with the chirality flipped. This gives rise to non-zero contributions to the EL equations already
of higher degree on the light cone (the so-called {\em{shear contributions}}; for details
see~\cite[\S3.7.11 and \S4.4.5]{cfs}). It turns out that, in order to make these contributions vanish,
one can modify the microlocal transformation by introducing the transformation~$U_\text{flip}$,
which involves new chiral potentials~$A^\text{even}_L$ and~$A^\text{even}_R$.
In the present situation of one generation, we simply choose~$A^\text{even}_L = A_R$
and~$A^\text{even}_R = A_L$. With this choice, the gauge phases are inserted into~\eqref{logmcl}
similar to~\eqref{log2j},
\beq \label{log2mlt}
\chi_L P(x,y) \asymp
-\frac{i}{2}\:\chi_L \left( A_L[\xi] \; \slashed{v}_L + \slashed{v}_L \;A_L[\xi] \right) \: T^{(1)} \:.
\eeq

Comparing the bilinear logarithmic terms~\eqref{log2j}--\eqref{log2mlt}, one sees that~\eqref{log2j}
and~\eqref{log2mlt} have a similar structure, whereas in the bilinear logarithmic mass terms~\eqref{l:mp2}
and~\eqref{l:mp3} the chiral potentials enter in a fundamentally different way.
This has the effect that the EL equations will be satisfied only if the following condition holds.
\begin{Lemma} \label{lemmalog}
The logarithmic bilinear contributions to the kernel of the fermionic projector drop out of the
EL equations if and only if the matrices~$B_L$ and~$B_R$ defined by
\begin{align*}
B_L :=\:& -\frac{1}{4} \:\Big( A_L[\xi]^2\, YY +2 A_L[\xi] \,YY A_L[\xi] + YY A_L[\xi]^2 \Big) \\
&+ A_L[\xi]\, Y A_R[\xi]\, Y - Y A_R[\xi]^2\, Y
+ Y A_R[\xi]\, Y A_L[\xi]
\end{align*}
agree and are a multiple of the matrix~$Y^2$.
\end{Lemma}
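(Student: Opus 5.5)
We follow the method of~\cite[Section~5.3]{cfs} and adapt it to the present one-generation setting. The first step is to collect all bilinear logarithmic contributions to the left-handed component of~$P(x,y)$, namely~\eqref{log2j}, \eqref{l:mp2}, \eqref{l:mp3} and~\eqref{log2mlt}, together with their right-handed counterparts obtained by~$L \leftrightarrow R$. Each of them carries the factor~$T^{(1)}$ and is bilinear in~$\xi$. We then contract with~$\slashed{\xi}$ and take the trace, as the EL equations require. This turns the current and microlocal contributions into scalars of the form~$\Tr(\ldots)$ involving~$\{A_L[\xi], j_L^i \xi_i\}$ and~$\{A_L[\xi], v_L^i\xi_i\}$. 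By contrast, the mass terms produce exactly the matrix~$B_L$, up to the prefactor~$-m^2/2$, once~\eqref{l:mp2} and~\eqref{l:mp3} are combined. To arrive at the precise form of~$B_L$, including the mixed terms~$A_L[\xi]\,Y A_R[\xi]\,Y$ and~$Y A_R[\xi]\,Y A_L[\xi]$, we will use~\cite[Lemma~4.6.2]{cfs} and read off the cross terms from the expansion of the chiral gauge phases to second order in~$A_{L\!/\!R}[\xi]$.

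The second step is to show that the current and microlocal contributions can always be arranged to cancel each other. The vector field~$v_L$ of the microlocal chiral transformation is still free. Since~\eqref{log2j} and~\eqref{log2mlt} have the same anticommutator structure, choosing~$v_L$ appropriately, and likewise~$v_R$, removes them together with any part of the mass term that is a multiple of~$Y^2$ times a scalar function. Here we use that~$\slashed{v}_L$ may be chosen as a vector field with values in the sector matrices that commute with~$Y$. The conclusion of this step is that the only freedom available for compensation is a contribution proportional to~$Y^2$. Moreover, this contribution must be the same on the left and right components, because the microlocal chiral transformation with~$A^\text{even}_L=A_R$ and~$A^\text{even}_R=A_L$ couples the two chiralities symmetrically.

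The third step uses the structure of the EL equations on the light cone. The logarithmic contributions enter through the closed chain~$A_{xy}=P(x,y)P(y,x)$, and the eigenvalues of its left- and right-handed parts must coincide to the relevant degree. Otherwise, a nonvanishing contribution with a~$\log|\xi^2|$ singularity appears that cannot be compensated. Via the~$\xi$-contracted form, this translates into two requirements. The first is~$B_L=B_R$, which comes from equality of the left and right eigenvalues. The second is that~$B_L$ be a multiple of~$Y^2$, since any other matrix structure would split degenerate eigenvalues in a way that the remaining free terms cannot absorb. Conversely, if both requirements hold, then the choice of~$v_{L\!/\!R}$ from the second step makes all logarithmic bilinear terms drop out, which gives the ``if'' direction.

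I expect the main obstacle to be the ``only if'' direction of the third step. It requires a careful argument that the only matrix structure the microlocal chiral transformation can compensate is~$Y^2$. The natural approach is to expand in the eigenbasis of~$Y$ and compare the coefficients of~$T^{(1)}$ sector by sector, as in~\cite[\S5.3]{cfs}.
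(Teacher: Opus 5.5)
There is a genuine gap, and it starts in Step~1. Your claim that \eqref{l:mp2} and \eqref{l:mp3} combine to $B_L$ up to a prefactor is false. Their sum is $\frac{m^2}{8}\big(A_L[\xi]^2YY+2A_L[\xi]YYA_L[\xi]+YYA_L[\xi]^2\big)-\frac{m^2}{2}\,YA_R[\xi]^2Y$, and no overall factor turns this into $B_L$: matching the first group forces the factor $-2/m^2$, while matching $-YA_R[\xi]^2Y$ forces $+2/m^2$. Moreover, the mixed terms $A_L[\xi]YA_R[\xi]Y$ and $YA_R[\xi]YA_L[\xi]$ are absent altogether. You will not find them in \cite[Lemma~4.6.2]{cfs} or in a further expansion of the gauge phases in the mass terms, since that lemma is exactly \eqref{l:mp2}--\eqref{l:mp3}. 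They come from the microlocal term \eqref{log2mlt}. At first order, $v_L$ must cancel the logarithmic mass terms $\frac{m^2}{2}\,(YY\slashed{A}_L-2Y\slashed{A}_RY+\slashed{A}_LYY)$. Inserting the gauge phase into this part of $\slashed{v}_L$ yields the anticommutator $-\frac{m^2}{4}\{A_L[\xi],\,A_L[\xi]YY-2YA_R[\xi]Y+YYA_L[\xi]\}\,T^{(1)}$. This is how the paper proceeds. It cites \cite[Lemma~4.6.3]{cfs}, whose matrix \cite[eq.~(4.6.21)]{cfs} with $A^{\text{even}}_R=A_L$ is this anticommutator plus \eqref{l:mp2} and \eqref{l:mp3}. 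It then expands the anticommutator, combines $-\frac{m^2}{4}+\frac{m^2}{8}=-\frac{m^2}{8}$, divides by $T^{(1)}$ and multiplies by $2/m^2$. So the total contribution is $+\frac{m^2}{2}B_L\,T^{(1)}$.

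This also breaks Step~2. The vector field $v_L$ is not free at the bilinear level: it is fixed by the first-order cancellation of the logarithmic poles. Its mass part does not cancel against anything; it is an indispensable constituent of $B_L$. It reverses the sign of the $A_L[\xi]^2$-type terms relative to \eqref{l:mp2} and produces the mixed terms. Consequently, you cannot justify that a multiple of $Y^2$ may remain by re-choosing $v_L$. The bilinear microlocal contribution is the specific matrix $\{A_L[\xi],v_L[\xi]\}$, not an adjustable multiple of $Y^2$. Both the tolerance for a multiple of $Y^2$ and the requirement $B_L=B_R$ follow from how these terms enter the EL equations in the continuum limit. That is the actual content of \cite[Lemma~4.6.3]{cfs}. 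Your Step~3 only asserts this heuristically (``any other matrix structure would split degenerate eigenvalues\ldots''). Since the ``if'' direction also rests on the flawed choice of $v_L$ in Step~2, neither direction of the equivalence is established. In the present paper, the appropriate proof is the citation together with the short algebraic specialization described above.
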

\Proof This statement of this lemma is proven in~\cite[Lemma~4.6.3]{cfs}, where~$B_L$ is
defined by (see~\cite[eq.~(4.6.21)]{cfs} with~$A_R^\text{even} = A_L$)
\begin{align*}
B_L :=\:& -\frac{m^2}{4} \Big\{ A_L[\xi], \left( A_L[\xi]\, YY - 2 Y A_R[\xi]\, Y + YY A_L[\xi]
\right) \Big\} \:T^{(1)} \\
&\quad\:+\frac{m^2}{8} \Big( A_L[\xi]^2  YY + 2 A_L[\xi] YY A_L[\xi]
+ YY A_L[\xi]^2 \xi_k) \Big)\, T^{(1)} \\
&\quad\: -\frac{m^2}{2}\: Y A_R[\xi]^2\: Y \:T^{(1)} \\
&=-\frac{m^2}{8} \Big( A_L[\xi]^2\, YY +2 A_L[\xi] \,YY A_L[\xi] + YY A_L[\xi]^2 \Big)\: T^{(1)} \\
&\quad\: +\frac{m^2}{2} \Big( A_L[\xi]\, Y A_R[\xi]\, Y - Y A_R[\xi]^2\, Y
+ Y A_R[\xi]\, Y A_L[\xi] \Big)\: T^{(1)}\:.
\end{align*}
Dividing by~$T^{(1)}$ and multiplying by~$2/m^2$ gives the result.
\QED

In general, it does not seem easy to evaluate what these conditions imply for the
dynamical gauge group. However, there is a simple special case.
\begin{Lemma} \label{lemmabillog}
Assume that the algebraic gauge potentials commute with the mass matrix, i.e.\
\beq \label{Ycommute}
A_L Y = Y A_L \qquad \text{and} \qquad A_R Y = Y A_R
\eeq
for all algebraic gauge potentials~$(A_L, A_R)$. Then the conditions of Lemma~\ref{lemmalog}
simplify to the condition that for any~$\xi$ on the light cone, there is a constant~$c=c(\xi)$ such that
\beq \label{quadratic}
\big( A_L[\xi] - A_R[\xi] \big)^2\: Y = c(\xi)\, Y \:.
\eeq
\end{Lemma}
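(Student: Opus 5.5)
Under the commutation assumption~\eqref{Ycommute}, the plan is to move all factors of~$Y$ in~$B_L$ and~$B_R$ to the right and then collect terms. Write~$a := A_L[\xi]$ and~$b := A_R[\xi]$. Since~$a$ and~$b$ are linear combinations of algebraic gauge potentials, both commute with~$Y$, and hence also with~$YY$.

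First consider~$B_L$. The mass part becomes
\[ -\frac{1}{4}\big( a^2 + 2a^2 + a^2 \big) YY = -a^2\, YY \:. \]
The remaining terms become
\[ a b\, YY - b^2\, YY + b a\, YY \:. \]
Adding these gives
\[ B_L = \big( -a^2 + ab + ba - b^2 \big) YY = -(a-b)^2\, YY \:. \]
The right-handed expression~$B_R$ is obtained from~$B_L$ by the replacement~$L \leftrightarrow R$, which amounts to exchanging~$a$ and~$b$. This gives~$B_R = -(b-a)^2\, YY$, which equals~$B_L$. So the first condition of Lemma~\ref{lemmalog}, namely~$B_L = B_R$, holds automatically in this setting.

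It remains to translate the second condition. Lemma~\ref{lemmalog} requires~$B_L = -c\, Y^2$ for some scalar~$c = c(\xi)$ (the sign is immaterial). By the computation above, this is equivalent to
\[ (a-b)^2\, Y^2 = c\, Y^2 \:. \]
This is the same as~\eqref{quadratic}, except that~$Y^2$ appears in place of~$Y$. Passing between the two forms is the one step that needs care.

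- \textbf{From~\eqref{quadratic} to the~$Y^2$ condition.} Multiply~\eqref{quadratic} on the right by~$Y$.
- \textbf{From the~$Y^2$ condition to~\eqref{quadratic}.} Set~$M := (a-b)^2 - c\,\1$. This~$M$ commutes with~$Y$ and satisfies~$M Y^2 = 0$. The mass matrix~$Y$ is Hermitian, so~$\ker Y^2 = \ker Y$. Hence~$MY$ vanishes on the range of~$Y$. It also vanishes on~$\ker Y$, and these two subspaces span the whole space, so~$MY = 0$. This is exactly~\eqref{quadratic}.

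The equivalence holds for each~$\xi$ on the light cone, which proves the lemma.
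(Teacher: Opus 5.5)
Your proof is correct and follows the paper's argument. Like the paper, you commute all factors of $Y$ to one side (the paper moves them to the left, you to the right, which makes no difference) to get $B_L = B_R = -\big(A_L[\xi]-A_R[\xi]\big)^2\,Y^2$, and then read off the condition of Lemma~\ref{lemmalog}. Your explicit passage between the $Y^2$ form and~\eqref{quadratic}, using that $Y$ is Hermitian so that its range and kernel together span the whole space, fills in a step the paper leaves implicit.
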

\Proof Commuting the mass matrices to the left, we obtain
\begin{align*}
B_L &= -Y^2 \:\Big( A_L[\xi]^2 - A_L[\xi]\: A_R[\xi] + A_R[\xi]^2 - A_R[\xi]\: A_L[\xi] \Big) \\
&= -Y^2 \:\big( A_L[\xi] - A_R[\xi] \big)^2 \:.
\end{align*}
Evaluating the conditions of Lemma~\ref{lemmalog} for~$B_L$ and~$B_R$ of this form gives the result.
\QED

\subsection{Constraints from the Field Tensor Terms} \label{secfield}
Following~\cite[\S5.3.4]{cfs}, further constraints are obtained from the field tensor terms.
As in~\cite[Proposition~5.3.8]{cfs}, we obtain the following condition.
\begin{Lemma} \label{lemmafield}
Assume that the chiral symmetry is broken in the sense that one of
the matrices~$X_L$ or~$X_R$ are not equal to the identity. Then all admissible gauge potentials
satisfy the relation
\beq \label{tracecond}
\Tr \big( A_L[\xi]+A_R[\xi] \big) = 0 \:.
\eeq
\end{Lemma}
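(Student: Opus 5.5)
The plan is to follow the route of~\cite[Proposition~5.3.8]{cfs}: analyze the field tensor terms in the EL equations and show that the only way they can be compensated is if the traces of the left- and right-handed potentials combine to zero. Throughout, the admissible potentials already satisfy the algebraic constraint~\eqref{Aalg} and, under the assumption~\eqref{Ycommute}, the quadratic condition~\eqref{quadratic} of Lemma~\ref{lemmabillog}.

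\textbf{Step 1: the field tensor contributions.} Expanding the chiral gauge phases to first order in the field tensor gives contributions to~$P(x,y)$ of the schematic form
\[ \chi_L\, F_L^{ij}\, \xi_i\gamma_j\, T^{(0)} \quad\text{(and $L\leftrightarrow R$)}, \]
together with the corresponding terms in~$P(y,x)$. I will insert these into the EL equations at the relevant degree on the light cone. After taking the sectorial projection, they combine with the contributions of the vacuum into terms proportional to~$\Tr(F_L[\xi]) + \Tr(F_R[\xi])$, weighted by the regularization-dependent factors.

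\textbf{Step 2: using the chiral asymmetry.} Because~$X_L$ or~$X_R$ differs from the identity, the left- and right-handed contributions carry different regularization factors; this is the~$\tau_\text{reg}$-dependence. Consequently the offending term cannot be absorbed by a microlocal chiral transformation, in contrast to the logarithmic poles handled in~\eqref{logmcl}. It must therefore vanish identically, which gives~$\Tr(F_L + F_R)=0$.

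\textbf{Step 3: passing from field strengths to potentials.} Since the trace kills commutators, the condition~$\Tr(F_L+F_R)=0$ reads~$\partial_{[i}\Tr(A_L+A_R)_{j]}=0$. So the trace part is locally pure gauge, and it can be removed by a~$\U(1)$ phase that does not act on the fermionic projector. Among the admissible potentials, which span a Lie algebra closed under constant rescalings, this says that the linear functional~$\Tr(A_L+A_R)$ must vanish on~$\g$, which is~\eqref{tracecond}.

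\textbf{Main obstacle.} The hardest part is Step 2: verifying that, after the sectorial projection, the coefficient of the trace combination is genuinely nonzero whenever the chiral symmetry is broken. I would first try to import the coefficient computation of~\cite[\S5.3.4]{cfs} directly, specializing it to one generation. The key point to check is that the~$\tau_\text{reg}$-weighted term survives and does not cancel between the sectors.
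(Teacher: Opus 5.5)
Your overall route is the paper's. The paper gives no argument of its own for this lemma; it simply transfers \cite[Proposition~5.3.8]{cfs}, the field tensor analysis of \cite[\S5.3.4]{cfs}, to the one-generation setting. Your Steps~1--2 likewise defer the decisive coefficient computation to that source, so as a plan they match.

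One caution about Step~2 as you phrase it. Suppose the left- and right-handed field tensor contributions really carried different regularization weights $w_L \neq w_R$. Then the resulting condition would be $w_L \Tr F_L + w_R \Tr F_R = 0$, not $\Tr(F_L+F_R)=0$. What the imported computation must deliver is a common nonzero prefactor, present only because of the chiral asymmetry, in front of the unweighted sum.

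The step that fails as written is Step~3. A gauge transformation can remove a pure gauge potential along a generator $T\in\g$ only if $T$ is vectorial ($T_L=T_R$). A transformation $\chi_L U_L+\chi_R U_R$ with $U_L\neq U_R$ is not unitary with respect to the spin scalar product, and it does not preserve the mass term. Under the hypothesis of the lemma, however, the generators with non-zero trace can be forced to be chiral:
\begin{itemize}
\item \eqref{Aalg} already excludes the overall phase $(\1,\1)$;
\item if, for instance, $\1-X_R$ is invertible, then $A_R=0$, and every generator with $\Tr(T_L+T_R)\neq 0$ is purely left-handed.
\end{itemize}
So ``removing the trace part by a $\U(1)$ phase'' is not available in general. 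The remark about constant rescalings does not close the gap either: closedness of the one-form $\Tr(A_L+A_R)$ is preserved under constant rescaling and does not force it to vanish.

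The correct bridge needs no gauge removal. Admissible potentials are arbitrary $\g^{\text{adm}}$-valued fields; compare the passage from~\eqref{cxi} to~\eqref{Acliff}. Suppose some $T\in\g^{\text{adm}}$ had $\Tr(T_L+T_R)\neq 0$. Then the potential $A_j(x)=a_j(x)\,T$ with $\partial_{[i}a_{j]}\neq 0$ would give $\Tr(F_L+F_R)=d\,\Tr(A_L+A_R)\neq 0$, contradicting Step~2. Hence the linear functional $T\mapsto \Tr(T_L+T_R)$ vanishes on all of $\g^{\text{adm}}$, which is exactly~\eqref{tracecond}.
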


\begin{Def} \label{defadm} The {\bf{admissible gauge potentials}}~$(A_L, A_R)$ are formed of all
algebraic gauge potentials which also satisfy the constraints from the bilinear logarithmic terms
and the field tensor terms. They
generate the {\bf{admissible gauge group}}~$\G^\text{adm}$.
\end{Def}

\section{The Main Theorem} \label{secmain}
In the previous section, we compiled the
constraints for the gauge potentials~$(A_L, A_R)$ coming from the causal
action principle. We now explain how to get from these constraints to the
setting studied in our main theorem (Theorem~\ref{thmintro} in the introduction
or, equivalently, Theorem~\ref{thm:main}).
We first note that the constraints~\eqref{Aalg} and~\eqref{tracecond}
are linear in the gauge potentials. Therefore, they are easy to evaluate; they
can be implemented at any stage of the construction.
With this in mind, the main difficulty is to evaluate the constraint~\eqref{quadratic},
which is quadratic in the gauge potentials.
Our main theorem gives a systematic way to evaluate the these quadratic
constraints. In order for our theorem to apply, we need to assume that
the mass matrix~$Y$ is invertible and commutes with the chiral potentials~\eqref{Ycommute}.
Under these assumptions, Lemma~\ref{lemmabillog} applies, and the
constraint~\eqref{quadratic} can be stated that
\beq \label{cxi}
\big( A_L[\xi] - A_R[\xi] \big)^2 = c\: \1 \:,
\eeq
to be satisfied for all vectors~$\xi$ on the light cone
(here~$c$ may depend on~$\xi$ and~$(A_L, A_R)$).
In order to avoid direction-dependent constraints for the gauge potentials,
the condition~\eqref{cxi} must hold similarly for all gauge potentials, i.e.\
\beq \label{Acliff}
\big( A_L - A_R \big)^2 = c\: \1 \qquad \text{for all~$(A_L, A_R) \in \g^\text{adm}$}\:,
\eeq
where~$\g^\text{adm}$ denotes the Lie algebra of the gauge group~$\G^\text{adm}$
in Definition~\ref{defadm}. This is precisely the condition~\eqref{liecond}
stated in the introduction. Theorem~\ref{thmintro} gives a method for determining
all maximal Lie algebras which satisfy this condition.
Taking these Lie algebras and implementing the linear
constraints~\eqref{Aalg} and~\eqref{tracecond} gives candidates for the
Lie algebra of the admissible gauge group~$\G^\text{adm}$.

This general procedure will be illustrated in Section~\ref{secexamples}
in various examples. In this section, we proceed by formulating the problem
abstractly and giving a general proof.

\subsection{Description with Clifford Algebras}
Given a parameter~$N \in \N$, we work on a complex Hilbert space decomposed as
\[ H=H_L\oplus H_R \qquad \text{with} \qquad \dim \H_L = \dim \H_R = N\:. \]
Using a block matrix notation, we introduce the following operators on~$\H$,
\beq \label{GDS}
\Gamma=\begin{pmatrix}0&\mathbb{I}\\ \mathbb{I}&0\end{pmatrix},\qquad
D=\begin{pmatrix}0&i\mathbb{I}\\ -i\mathbb{I}&0\end{pmatrix},\qquad
S:=-i\,D\Gamma=\begin{pmatrix}\mathbb{I}&0\\ 0&-\mathbb{I}\end{pmatrix}.
\eeq
Then~$\Gamma^2=D^2=S^2=\mathbb{I}$ and~$\{S,\Gamma\}=0$, $\{D,\Gamma\}=0$, $\{S,D\}=0$. These three matrices generate a fixed copy of~$\mathbb{C}\ell(2)$ and provide a convenient basis. The operator~$S$ is the grading operator: it equals~$+I$ on~$H_L$ and~$-I$ on~$H_R$.  The operator~$\Gamma$ swaps the chirality: it interchanges~$H_L$ and~$H_R$.

In this description, chiral potentials are represented by a block-diagonal matrix
\beq \label{Achiral}
A = \begin{pmatrix} A_L & 0 \\[2pt] 0 & A_R \end{pmatrix} \:.
\eeq
We decompose~$A$ into~$\Gamma$–even (vectorial) and~$\Gamma$–odd (axial) parts via the adjoint action of~$\Gamma$,
\begin{align}
A_V &:= \frac{1}{2}\,\big(A+\Gamma A\Gamma\big) = \frac{1}{2}\begin{pmatrix} A_L{+}A_R&0\\[2pt]0&A_L{+}A_R\end{pmatrix}, \label{AV} \\[4pt]
A_A &:= \frac{1}{2}\,\big(A-\Gamma A\Gamma\big) = \frac{1}{2}\begin{pmatrix} A_L{-}A_R&0\\[2pt]0&-(A_L{-}A_R)\end{pmatrix}. \label{AA}
\end{align}

Polarizing the condition~\eqref{Acliff} (as explained before~\eqref{sproddef}
in the introduction), one sees that the admissible
matrices~$A_L - A_R$ must generate a Clifford algebra.
In order to formalize this condition, we work inside a complex even–dimensional Clifford algebra~$\bb{C}\ell(2n)$ acting on~$H$, with Hermitian generators~$\{\gamma_a\}_{a=1}^{2n}$ satisfying the anti-commutation relations
\[ 
\{\gamma_a,\gamma_b\} = 2\,\delta_{ab}\,\I \:. \]
We choose the Clifford algebra such that it contains the matrices~$\Gamma$
and~$S$ in~\eqref{GDS}.
For the admissible chiral potentials of the form~\eqref{Achiral},
the corresponding axial components~\eqref{AA} generate a subspace
of the grade one subspace of the Clifford algebra~$\bb{C}\ell(2n)$.
We denote this subspace by~$V \subset \bb{C}\ell^{[1]}(2n)$.
We fix an orthogonal decomposition of the grade one vector space
\[ \bb{C}\ell^{[1]}(2n) = \Span\{\gamma_a\}_{a=1}^{2n} = V \oplus W,
\qquad \dim V = p,\ \dim W = s,\ p+s=2n \:, \]
with bases~$\{\gamma_i\}_{i=1}^{p}\subset V$ and~$\{\gamma_\alpha\}_{\alpha=p+1}^{p+s}\subset W$. With this choice there is a graded-tensor identification
\[ \bb{C}\ell(V\oplus W) \cong \bb{C}\ell(V) \hat{\otimes} \bb{C}\ell(W)\:. \]

Finally, it is obvious from~\eqref{AA} that every~$v\in V$ is represented in
block form as~$v = \mathrm{diag}(\gamma, -\gamma)$.
As a consequence, the subspace~$V$ is orthogonal to both~$S$ and~$\Gamma$.
Moreover, every~$v \in \V$ is determined by its upper left block matrix entry.
We denote the vector space generated by these upper left block matrices by~$\mathscr{A}
\in \Symm(H_L)$.
Thus every~$v \in V$ can be written uniquely as
\[ v = \begin{pmatrix} \gamma & 0 \\ 0 & -\gamma \end{pmatrix}
\qquad \text{with~$\gamma \in \mathscr{A}$}\:. \]
Clearly, the vectors in~$\mathscr{A}$ also satisfy the anti-commutation relations
and generate a Clifford algebra~$\C\ell(\mathscr{A})$ represented on~$\H_L$.

\subsubsection{Grade Decomposition}

The Clifford algebra~$\bb{C}\ell(V)$ decomposes as a direct sum of grade-$r$ subspaces:
\[
  \bb{C}\ell(V) = \bigoplus_{r=0}^{p} \bb{C}\ell^{[r]}(V) \:, \]
where~$\bb{C}\ell^{[r]}(V)$ is spanned by all products~$\gamma_{i_1}\cdots\gamma_{i_r}$
with~$i_1 < \cdots < i_r$.  In particular:
\bitem
  \item $\bb{C}\ell^{[0]}(V) = \bb{C}\cdot I$ (scalars),
  \item $\bb{C}\ell^{[1]}(V) = V$ (1-vectors),
  \item $\bb{C}\ell^{[2]}(V) = \Span\{\gamma_i\gamma_j : i<j\}$ (bi-vectors),
  \item $\bb{C}\ell^{[p]}(V) = \bb{C}\cdot\omega$ (pseudoscalar, defined below).
\eitem
Every~$X\in\bb{C}\ell(V)$ is written uniquely as~$X = \sum_{r=0}^p \langle X \rangle_{r}$,
where~$\langle X\rangle_{r}\in\bb{C}\ell^{[r]}(V)$ is the grade-$r$ part of~$X$.

Since the Lie algebras in this paper are represented by Hermitian matrices
and carry the bracket~$i[\cdot,\cdot]$, we use the Hermitian real form of the
bi-vector space,
\[
\mathfrak{spin}(V):=\operatorname{span}_{\R}
\big\{i\gamma_j\gamma_k:1\leq j<k\leq p\big\} \:.
\]
It satisfies
\[
i[V,V]\subset\mathfrak{spin}(V),\qquad
i[\mathfrak{spin}(V),V]\subset V,\qquad
i[\mathfrak{spin}(V),\mathfrak{spin}(V)]\subset\mathfrak{spin}(V) \:.
\]
Consequently, $V\oplus\mathfrak{spin}(V)$ is closed under the bracket
$i[\cdot,\cdot]$. As a real Lie algebra, this sum is naturally isomorphic to
$\mathfrak{spin}(p+1)$. Indeed, if~$e_1,\ldots,e_{p+1}$ are the generators of a
Euclidean Clifford algebra in one higher dimension, the isomorphism sends
$\gamma_j$ to~$i e_j e_{p+1}$ and~$i\gamma_j\gamma_k$ to~$i e_j e_k$.
These are the only general facts about bi-vectors and spin algebras that will be
needed below.

We refer to the parity of the grade~$r$ as the Clifford grade parity. Thus
\[ \bb{C}\ell^{\mathrm{even}}(V):=\bigoplus_{\substack{0\leq r\leq p\\ r\ \mathrm{even}}}
\bb{C}\ell^{[r]}(V),\qquad
\bb{C}\ell^{\mathrm{odd}}(V):=\bigoplus_{\substack{0\leq r\leq p\\ r\ \mathrm{odd}}}
\bb{C}\ell^{[r]}(V) \:. \]

\subsubsection{The Pseudoscalar}

Define the pseudoscalar of~$V$ by
\[ 
  \omega := i^{\lfloor p/2\rfloor}\,\gamma_1\gamma_2\cdots\gamma_p \;\in\;\bb{C}\ell^{[p]}{V} \:. \]
The prefactor~$i^{\lfloor p/2\rfloor}$ is chosen to make~$\omega$ Hermitian.
The pseudoscalar satisfies the following relations,
\begin{align*}
  \omega^2 &= \bb{I} \\ 
  \omega\gamma_j &= (-1)^{p-1}\,\gamma_j\omega
    \quad\text{for all }j=1,\ldots,p \:. 
\end{align*}
In particular:
\bitem
  \item If~$p$ is even: $\omega\gamma_j = -\gamma_j\omega$
    (anti-commutes with every~$\gamma_j\in V$).
  \item If~$p$ is odd: $\omega\gamma_j = +\gamma_j\omega$
    (commutes with every~$\gamma_j\in V$, so~$\omega$ is central in~$\bb{C}\ell(V)$
    and lies in~$\Comm(V)$).
\eitem
Furthermore, $\omega$ commutes with every even element of~$\bb{C}\ell(V)$.

\subsubsection{The Commutant of~$V$}

\begin{Def} {\bf{(Commutant)}}
\label{def:commutant}
The commutant of~$V$ is defined as
\[ 
  \Comm(V) := \big\{ A \in \mathrm{End}(H) \:\big|\: [A,v] = 0\text{ for all }v\in V \big\} \:. \]
\end{Def}

The even subalgebra~$\bb{C}\ell^+(W)\subset\bb{C}\ell(W)$ is contained in~$\Comm(V)$: every even element of~$\bb{C}\ell(W)$ commutes with every element of~$\bb{C}\ell(V)$ by the graded tensor product rule stated above. However, the full commutant is generally larger than~$\bb{C}\ell^+(W)$. When~$p$ is odd, the pseudoscalar~$\omega\in\bb{C}\ell(V)$ commutes with every~$\gamma_j\in V$ and therefore lies in~$\Comm(V)$ as well.

\subsection{Admissibility Constraints and Statement of Main Theorem}
Let~$\mathfrak{g}\subset\mathrm{End}(H)$ be the gauge Lie algebra with Lie bracket~$i[\,\cdot\,,\,\cdot\,]$. We say that~$\mathfrak{g}$ is {\em{admissible}} if the following
two conditions hold:
\begin{align}
  \tag{A1}\label{eq:A1} &[S,\,Y] = 0 \quad\text{for all }Y\in\mathfrak{g},\\
  \tag{A2}\label{eq:A2} &Y^{(-)} \in V = \bb{C}\ell^{[1]}(V)
    \quad\text{for all }Y\in\mathfrak{g}.
\end{align}
Condition~\eqref{eq:A1} says every generator commutes with~$S$. Condition~\eqref{eq:A2} says the~$\Gamma$-odd part of every generator is a grade-1 element (a 1-vector) of~$\bb{C}\ell(V)$.

The results of the remaining sections establish the following classification theorem.
\begin{Thm} {\bf{(Main classification theorem)}}
\label{thm:main}
Let~$\mathfrak{g}\subset \text{End}(H)$ be an admissible gauge Lie algebra satisfying
the conditions
\beq \label{c1}
[S,Y]=0 \qquad \text{for all }Y\in\mathfrak{g}
\eeq
and
\beq \label{c2}
Y^{(-)}\in V=\bb{C}\ell^{[1]}(V)\qquad\text{for all }Y\in\mathfrak{g} \:.
\eeq
Then, depending on the value of the parameter~$p=\dim V$, the following
statements hold true.
\bitem
\item[{\rm{(a)}}] If~$p$ is odd, the generators of~$\g$ have the form
\[ g=\begin{pmatrix}
\gamma+\sigma+C & 0\\
0 & -\gamma+\sigma+C
\end{pmatrix} \]
with
\beq \label{gammsC}
\gamma\in \mathscr{A},\qquad\sigma\in\bb{C}\ell^{[2]}(\mathscr{A}),\qquad C\in\Comm(\mathscr{A}) \:.
\eeq
\item[{\rm{(b)}}] If~$p$ is even, there is a linear mapping~$B : \mathscr{A} \rightarrow \mathscr{A} \otimes \Comm(\mathscr{A})$ such that every generator~$g \in \g$ takes the form
\beq \label{gform}
g=\begin{pmatrix}
\big(\gamma +\omega B(\gamma) \big)+\sigma+C & 0\\
0 & \big(\gamma + \omega B(\gamma) \big) +\sigma+C
\end{pmatrix} ,
\eeq
again with~$\gamma$, $\sigma$ and~$C$ as in~\eqref{gammsC}.
Moreover, the linear mapping~$B$ has 
for all~$\gamma, \gamma' \in \mathscr{A}$ the following properties,
\begin{align}
\text{symmetry:} \quad \big\{ \gamma,\, B \gamma' \big\} &= \big\{ B \gamma,\, \gamma' \big\} 
\label{even1} \\
\big[ B \gamma, \sigma' \big] + \big[ \sigma, B \gamma' \big]
&= B \big( [\gamma, \sigma' ] + [ \sigma, \gamma' ] \big)  \\
\big[ B \gamma, B \gamma' \big] &\in \bb{C}\ell^{[2]}(\mathscr{A}) \oplus \Comm(\mathscr{A}) \:. \label{even3}
\end{align}
\eitem
\end{Thm}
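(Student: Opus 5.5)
The plan is to split every generator into its vectorial and axial parts, $g = A_V + A_A$ as in~\eqref{AV} and~\eqref{AA}, and then to use closure of~$\g$ under~$i[\cdot,\cdot]$ together with~\eqref{c2} to constrain the vectorial part. By~\eqref{c1} the generator is block diagonal. By~\eqref{c2} its axial part is $\mathrm{diag}(\gamma,-\gamma)$ with $\gamma\in\mathscr{A}$. Hence $g=\mathrm{diag}(\gamma+M,\,-\gamma+M)$ with $M=\tfrac12(A_L+A_R)$, and the whole problem is to determine the admissible~$M$.

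For two generators $g,g'$, the axial part of $i[g,g']$ is $\mathrm{diag}(\eta,-\eta)$ with
\[ \eta = i[M,\gamma'] + i[\gamma,M'] \:, \]
and~\eqref{c2} applied to $i[g,g']$ forces $\eta\in\mathscr{A}$. The vectorial part of $i[g,g']$ is $i[M,M']+i[\gamma,\gamma']$; it imposes no condition by itself, but it must again be of the admissible form.

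As a first step I would show that the $\gamma$-independent part of~$M$ is of the form $\sigma+C$. Consider a generator with $\gamma=0$ (the kernel of the projection $\g\to\mathscr{A}$). For it, $\mathrm{ad}_M$ maps $\mathscr{A}$ into itself. Since $\mathrm{ad}_M$ is a derivation, applying it to $\{u,v\}=2\la u,v\ra\,\1$ shows that it is skew with respect to the scalar product~\eqref{sproddef}, so $\mathrm{ad}_M|_{\mathscr{A}}\in\mathfrak{so}(\mathscr{A})$. Every element of $\mathfrak{so}(\mathscr{A})$ is realized by $\mathrm{ad}_\sigma$ for a unique $\sigma\in\mathfrak{spin}(\mathscr{A})\subset\bb{C}\ell^{[2]}(\mathscr{A})$, so $M-\sigma\in\Comm(\mathscr{A})$.

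Conversely, $\mathscr{A}\oplus\mathfrak{spin}(\mathscr{A})\oplus\Comm(\mathscr{A})$ (embedded as above) is closed under the bracket. This holds because $i[\mathscr{A},\mathscr{A}]\subset\mathfrak{spin}(\mathscr{A})$, the commutant commutes with all of~$\mathscr{A}$ and hence with its Clifford algebra, and the commutant is closed under commutators. Moreover $(\gamma-(-\gamma))^2=4\gamma^2$ is a multiple of~$\1$, so this algebra satisfies~\eqref{liecond}. By maximality these elements therefore all belong to~$\g$.

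The second step treats a general generator, where $M=M(\gamma)$ may contain a part depending linearly on~$\gamma$ beyond $\sigma+C$. Write $M=\sigma+C+R$. To analyze~$R$, I would decompose $\mathrm{End}(H_L)$ as the Clifford algebra $\bb{C}\ell(\mathscr{A})$ tensored with the commutant of the full algebra $\bb{C}\ell(\mathscr{A})$. Then I expand~$R$ by grade, with coefficients in that commutant. The closure condition $\eta\in\mathscr{A}$ for all pairs uses the fact that $\mathrm{ad}_{\gamma'}$ shifts grades by~$\pm1$. This should force~$R$ to have only components whose commutator with vectors lands in grade one:
\begin{itemize}
\item grade-zero and grade-two components have already been absorbed into~$C$ and~$\sigma$;
\item for $p$ even, the remaining possibility is the grade-$(p-1)$ elements $\omega\gamma$, which anticommute with~$\omega$ and behave like vectors twisted by the pseudoscalar. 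This yields the ansatz $R=\omega B(\gamma)$ with $B:\mathscr{A}\to\mathscr{A}\otimes\Comm(\mathscr{A})$;
\item for $p$ odd, $\omega$ is central, so $\omega\gamma$ has grade $p-1$ and commutes like a bivector-dual. I expect it either to vanish or to be absorbed into $\sigma+C$, which gives statement~(a).
\end{itemize}

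The third step derives the identities in~(b) by inserting the ansatz into the bracket:
\begin{itemize}
\item the condition $\eta\in\mathscr{A}$ for two purely axial-plus-$B$ generators gives the symmetry relation~\eqref{even1};
\item the mixed brackets with the $\sigma$-parts, required to reproduce the form~\eqref{gform} with the new vector $[\gamma,\sigma']+[\sigma,\gamma']$, give the equivariance identity;
\item the vectorial part of the bracket of two $B$-terms must lie in $\bb{C}\ell^{[2]}\oplus\Comm$, which is~\eqref{even3}.
\end{itemize}

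I expect the main obstacle to be the grade analysis in the second step. The coefficients of~$R$ lie in the commutant rather than in~$\C$, and for odd~$p$ the algebra $\bb{C}\ell(\mathscr{A})$ is not simple, so grades $r$ and $p-r$ are identified via~$\omega$. My first attempt would be to project onto the two central idempotents $\tfrac12(\1\pm\omega)$ when $p$ is odd, reducing to a simple Clifford algebra in each summand, and to run a degree-counting argument on $\mathrm{ad}_{\gamma'}$ there.
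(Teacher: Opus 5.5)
Your Step~1 is correct, and it is more robust than the paper's treatment of Type~II generators. The derivation/skewness argument gives $M\in\mathfrak{spin}(\mathscr{A})\oplus\Comm(\mathscr{A})$ without using any grade decomposition. The gaps are elsewhere.

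\textbf{The core step is missing, and the mechanism you describe is wrong.} Step~2 is the heart of the theorem, and it is only announced. For generators with $\gamma\neq0$ the constraint is the cross condition $i[M(\gamma),\gamma']+i[\gamma,M(\gamma')]\in\mathscr{A}$, i.e.\ \eqref{cgen}. This condition allows the two terms to cancel. It does \emph{not} force $[\gamma',R]$ into grade one. The twist term shows this: $[\gamma_j,\omega\gamma_l]=-2\delta_{jl}\,\omega$ has grade~$p$, and it drops out only because the coefficient matrix is symmetric.

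What you need is an argument that the symmetric conditions
\[ \gamma_j\intprod A_k^{[r]}=\gamma_k\intprod A_j^{[r]}\ (r\text{ even}),\qquad \gamma_j\extprod A_k^{[r]}=\gamma_k\extprod A_j^{[r]}\ (r\text{ odd}) \]
force $A^{[r]}=0$. The paper supplies this in Lemmas~\ref{lem:even4} and~\ref{lem:odd}:
\begin{itemize}
\item Set $T=\sum_\ell\gamma_\ell\extprod A_\ell$ (even $r$), resp.\ $T=\sum_\ell\gamma_\ell\intprod A_\ell$ (odd $r$).
\item Show $A_k=(1-r)^{-1}\gamma_k\intprod T$, resp.\ $A_k=-(p-r-1)^{-1}\gamma_k\extprod T$.
\item Then the same quantity is also antisymmetric in $(j,k)$, so it vanishes. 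The degenerate factor $p-r-1$ is exactly what singles out the surviving grade $r=p-1$.
\end{itemize}
The same argument is needed for grade-two terms with non-scalar commutant coefficients (Section~\ref{sec:bivectorcoefficients}). Such terms are not in $\bb{C}\ell^{[2]}(\mathscr{A})\oplus\Comm(\mathscr{A})$, so they are not ``already absorbed into~$\sigma$''.

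\textbf{The maximality step should be dropped.} Theorem~\ref{thm:main} does not assume maximality. Even with it, you never show that $\g$ together with $\mathscr{A}\oplus\mathfrak{spin}(\mathscr{A})\oplus\Comm(\mathscr{A})$ generates an algebra satisfying~\eqref{liecond}. Worse, if every pure axial element $(\gamma',-\gamma')$ lay in~$\g$, pairing it with any generator would give $i[M,\gamma']\in\mathscr{A}$ for all~$\gamma'$. That forces $B\equiv0$, which is false for the admissible (non-maximal) algebra of Example~\ref{excounter1}.

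\textbf{The odd-$p$ obstacle is real, and projecting with $\tfrac12(\1\pm\omega)$ does not remove it.} When $\omega$ has both eigenvalues on~$H_L$, $\mathrm{End}(H_L)$ is strictly larger than $\bb{C}\ell(\mathscr{A})$ times its commutant. The $\omega$-odd blocks that your projection discards are exactly where twist terms survive. If $J$ anticommutes with all of~$\mathscr{A}$, then $[\gamma_j,\gamma_lJ]=2\delta_{jl}J$. Hence $A_k=i\sum_l B_k^l\gamma_lJ$ with $B$ real symmetric satisfies~\eqref{cgen}.

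A concrete instance: take $H_L=\C^2\otimes\C^2$, $\mathscr{A}=\Span_{\R}\{\sigma_k\otimes\sigma_3\}$ (so $p=3$), and $J=\1\otimes\sigma_1$. The pairs
\[ \big(\sigma_k\otimes(\sigma_3-\sigma_2),\,-\sigma_k\otimes(\sigma_3+\sigma_2)\big)\quad\text{and}\quad(\sigma_k\otimes\1,\,\sigma_k\otimes\1),\qquad k=1,2,3, \]
span a Lie algebra satisfying \eqref{c1}, \eqref{c2} and~\eqref{liecond}. Its vectorial part $-\sigma_k\otimes\sigma_2$ is not in $\bb{C}\ell^{[2]}(\mathscr{A})\oplus\Comm(\mathscr{A})$.

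So part~(a) cannot be proved as stated without an extra hypothesis. Two options are $\omega$ scalar on~$H_L$, or maximality: this algebra sits inside the one built from the five-dimensional Clifford space spanned by $\sigma_k\otimes\sigma_3$, $\1\otimes\sigma_1$, $\1\otimes\sigma_2$. The paper's proof has the same blind spot. It assumes that every $A_k$ is a sum of products $C\alpha$ with $C\in\Comm(V)$ and $\alpha\in\bb{C}\ell(V)$. That holds for even~$p$ by the double commutant theorem, but it fails in this example.
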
 \noindent
We remark that a specific choice of~$B$ having the above
properties~\eqref{even1}--\eqref{even3}  is
\beq \label{Bsimpform}
B \gamma = b \gamma \qquad \text{with~$b \in \Comm(\mathscr{A})$ and~$b^2 \sim \1$}\:.
\eeq
However, we show in an example that~$B$ in general is not of this form
(see Example~\ref{excounter1}).
We conjecture that, if~$\g$ is semi-simple, then every linear mapping~$B$ should necessarily
of the form~\eqref{Bsimpform}. But, at present, we have no proof of this conjecture.

\subsection{The Block-Diagonal Constraint}
\label{sec:blockdiag}

The condition~\eqref{eq:A1} simply enforces block-diagonality and places no constraint whatsoever on Clifford grades. That is, for~$Y\in\mathrm{End}(H)$: $[S,Y]=0$ if and only if~$Y = \mathrm{diag}(Y_L, Y_R)$. The proof is straightforward.

\begin{Prp} {\bf{(Block-diagonality is preserved by the~$\Gamma$-decomposition)}}
\label{prop:A1preserved}
If~$[S,Y]=0$, then~$[S,Y^{(+)}]=0$ and~$[S,Y^{(-)}]=0$.
\end{Prp}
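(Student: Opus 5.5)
The plan is a direct computation, resting on the relations $\Gamma^2=\I$ and $\{S,\Gamma\}=0$ from~\eqref{GDS}. Here $Y^{(\pm)}$ denotes the $\Gamma$-even and $\Gamma$-odd parts of~$Y$, as in~\eqref{AV} and~\eqref{AA}:
\[ Y^{(\pm)} = \tfrac{1}{2}\big(Y \pm \Gamma Y \Gamma\big) \:. \]
By linearity of the commutator,
\[ [S,Y^{(\pm)}] = \tfrac12\big([S,Y] \pm [S,\Gamma Y\Gamma]\big) \:. \]
The first term vanishes by hypothesis. Thus the claim reduces to showing that $[S,\Gamma Y \Gamma]=0$ whenever $[S,Y]=0$.

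For this reduced claim, use the anti-commutation $S\Gamma = -\Gamma S$ twice:
\[ S\,\Gamma Y \Gamma = -\Gamma S Y \Gamma = -\Gamma Y S \Gamma = \Gamma Y \Gamma S \:. \]
The middle equality uses $[S,Y]=0$. Hence $[S,\Gamma Y\Gamma]=0$, and both $Y^{(+)}$ and $Y^{(-)}$ commute with~$S$.

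As a sanity check, compare with the block-matrix picture. Block-diagonality means $Y=\mathrm{diag}(Y_L,Y_R)$. Conjugating by $\Gamma$ swaps the two blocks, giving $\Gamma Y\Gamma=\mathrm{diag}(Y_R,Y_L)$, which is again block-diagonal. The formulas~\eqref{AV} and~\eqref{AA} then show explicitly that $Y^{(\pm)}$ are block-diagonal.

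There is no real obstacle. The only point needing care is the sign: $S$ anticommutes with $\Gamma$, but it appears twice, so the two signs cancel.
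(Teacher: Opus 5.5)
Your proof is correct and follows the paper's argument: the paper also reduces the claim to $[S,\Gamma Y\Gamma]=0$ and derives $[S,\Gamma Y\Gamma]=-\Gamma[S,Y]\Gamma=0$ from $S\Gamma=-\Gamma S$, which is your double use of the anticommutation. The relation $\Gamma^2=\mathbb{I}$ that you cite is not actually needed.
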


\begin{proof}
We show~$[S,\Gamma Y\Gamma]=0$ when~$[S,Y]=0$.  Using~$\{S,\Gamma\}=0$,
i.e.\ $S\Gamma = -\Gamma S$:
\begin{align*}
  [S,\Gamma Y\Gamma] &= S(\Gamma Y\Gamma) - (\Gamma Y\Gamma)S
  = -\Gamma(SY)\Gamma + \Gamma(YS)\Gamma
  = -\Gamma[S,Y]\Gamma = 0.
\end{align*}
Therefore, $[S,Y^{(+)}] = \frac{1}{2}([S,Y]+[S,\Gamma Y\Gamma]) = 0$, and
similarly~$[S,Y^{(-)}]=0$.
\end{proof}

From here on, all generators are block-diagonal (by condition~\eqref{eq:A1}). All further constraints come from condition~\eqref{eq:A2} and from Lie closure.

\subsection{The Grade-Shift Lemma}
\label{sec:gradeshift}

We now focus on a key algebraic result that is foundational to proving later results: commuting a 1-vector with a homogeneous element does not mix grades. A homogeneous Clifford element is an element lying entirely in one grade $\Cl^{[r]}(V)$, rather than a sum of components of different grades.

\subsubsection{Interior and Exterior Products}

\begin{Def} {\bf{(Interior and exterior products)}}
\label{def:intext}
For~$v\in V=\bb{C}\ell^{[1]}(V)$ and~$\alpha\in\bb{C}\ell^{[r]}(V)$, we define
\begin{align*}
  v\intprod\alpha &:= \frac{1}{2}(v\alpha - (-1)^r\alpha v)\in\bb{C}\ell^{[r-1]}(V) \\
  v\extprod\alpha &:= \frac{1}{2}(v\alpha + (-1)^r\alpha v)\in\bb{C}\ell^{[r+1]}(V) \:.
\end{align*}
These are the interior product (grade-lowering) and exterior product (grade-raising). Adding them gives the Clifford product: $v\alpha = (v\intprod\alpha) + (v\extprod\alpha)$.
\end{Def}

\begin{Lemma} {\bf{(Grade-Shift Lemma)}}
\label{lem:gradeshift}
Let~$v\in V$ and~$\alpha\in\bb{C}\ell^{[r]}(V)$.  Then
\[ 
  [v,\alpha] = \begin{cases}
    2(v\intprod\alpha)\in\bb{C}\ell^{[r-1]}(V) & \text{if }r\text{ is even},\\
    2(v\extprod\alpha)\in\bb{C}\ell^{[r+1]}(V) & \text{if }r\text{ is odd} \:.
  \end{cases} \]
In particular, the commutator of a 1-vector with a homogeneous grade-$r$ element is itself homogeneous: it lands in grade~$r-1$ (if~$r$ is even) or grade~$r+1$ (if~$r$ is odd).  No mixing of grades occurs.
\end{Lemma}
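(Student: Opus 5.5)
Since both sides of the claimed identity are linear in~$\alpha$, it suffices to check the Grade-Shift Lemma on a single basis blade. Fix an orthonormal basis~$\{\gamma_i\}_{i=1}^p$ of~$V$ and take~$\alpha=\gamma_{i_1}\cdots\gamma_{i_r}$ with~$i_1<\dots<i_r$. By linearity in~$v$, we may further take~$v=\gamma_j$. The key computation is the commutation rule of~$\gamma_j$ with the blade.

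\emph{Case $j\notin\{i_1,\ldots,i_r\}$.} Moving~$\gamma_j$ past each factor costs a sign, so~$\alpha\gamma_j=(-1)^r\gamma_j\alpha$. Then~$\gamma_j\intprod\alpha=0$, and~$\gamma_j\extprod\alpha=\gamma_j\alpha$ is, up to the sign needed to reorder indices, the blade of grade~$r+1$.

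\emph{Case $j=i_k$.} Moving~$\gamma_j$ past the~$r-1$ other factors and using~$\gamma_j^2=\1$ gives~$\alpha\gamma_j=(-1)^{r-1}\gamma_j\alpha$. Then~$\gamma_j\extprod\alpha=0$, and~$\gamma_j\intprod\alpha=\gamma_j\alpha=\pm$ the blade with~$\gamma_{i_k}$ removed, which has grade~$r-1$.

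Both cases together confirm the assertion in Definition~\ref{def:intext} that~$v\intprod\alpha\in\bb{C}\ell^{[r-1]}(V)$ and~$v\extprod\alpha\in\bb{C}\ell^{[r+1]}(V)$. Each of these lies in a single grade, so extending linearly in~$v$ and~$\alpha$ preserves the grade statements.

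Now compare the definitions. From Definition~\ref{def:intext},
\[ v\intprod\alpha-(-1)^r\, v\extprod\alpha \;\;\text{and}\;\; v\alpha=(v\intprod\alpha)+(v\extprod\alpha), \qquad \alpha v=(-1)^r\big(v\extprod\alpha-v\intprod\alpha\big). \]
Hence
\[ [v,\alpha]=v\alpha-\alpha v=\big(1+(-1)^r\big)\,v\intprod\alpha+\big(1-(-1)^r\big)\,v\extprod\alpha . \]
For~$r$ even this equals~$2\,v\intprod\alpha$, and for~$r$ odd it equals~$2\,v\extprod\alpha$. This is the claimed formula, and it shows directly that no mixing of grades occurs.

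The only step needing care is the well-definedness of the grade decomposition, i.e.\ that grades are independent of the chosen basis. This is standard: the grades are determined by the orthogonal group action, or equivalently by the canonical isomorphism with the exterior algebra. We also need the generic assumption that~$\bb{C}\ell(V)$ is faithfully represented, so that the blades are linearly independent. In small representations, for example when~$p$ is odd and~$\omega$ is a multiple of~$\1$, grades may coincide. In that case we interpret the statement in the abstract Clifford algebra and then push it forward to~$\mathrm{End}(H)$, where the identities persist. I expect this to be the only real obstacle; the remainder is the sign bookkeeping above.
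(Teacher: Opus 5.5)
Your argument is correct and is essentially the paper's proof. The paper just observes that for $r$ even (resp.\ odd) the defining formula for $v\intprod\alpha$ (resp.\ $v\extprod\alpha$) in Definition~\ref{def:intext} is literally $\tfrac12[v,\alpha]$; this is what your identity $[v,\alpha]=(1+(-1)^r)\,v\intprod\alpha+(1-(-1)^r)\,v\extprod\alpha$ reduces to. The paper takes the grade memberships from that definition without proof, so your blade computation supplies a verification it omits.

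Your faithfulness caveat is automatic for $V$ acting on $H$. For even $p$ the complex Clifford algebra is simple. For odd $p$ the blocks $\gamma$ and $-\gamma$ carry opposite eigenvalues of the central element $\omega$, so both irreducible summands occur. The caveat does matter for $\mathscr{A}$ acting on $H_L$, e.g.\ for the Pauli matrices, where $\omega$ is a multiple of the identity and the grades collapse.
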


\begin{proof}
We compute directly from Definition~\ref{def:intext}.

If~$r$ is even, then~$(-1)^r=+1$, so
\[ v\intprod\alpha=\frac{1}{2}(v\alpha-\alpha v)=\frac{1}{2}[v,\alpha] \:. \]
Hence
\[ [v,\alpha]=2(v\intprod\alpha)\in\bb{C}\ell^{[r-1]}(V) \:. \]

If~$r$ is odd, then~$(-1)^r=-1$, so
\[ v\extprod\alpha=\frac{1}{2}(v\alpha-\alpha v)=\frac{1}{2}[v,\alpha] \:. \]
Hence
\[ [v,\alpha]=2(v\extprod\alpha)\in\bb{C}\ell^{[r+1]}(V) \:. \]

Thus the commutator of a~$1$-vector with a homogeneous grade-$r$ element is itself homogeneous: it lies in grade~$r-1$ when~$r$ is even, and in grade~$r+1$ when~$r$ is odd. In particular, no mixing of grades occurs.
\end{proof} \noindent
We see therefore that the commutator with a 1-vector reverses Clifford grade parity on each homogeneous component.

The commutator acts grade-wise on the homogeneous decomposition. For~$v\in V$ and
\[ X=\sum_{r=0}^p \langle X\rangle_r \in \bb{C}\ell(V) \:, \]
one has
\[ 
[v,X]=\sum_{\substack{0\leq r\leq p\\ \text{$r$ even}}}2\bigl(v\intprod \langle X\rangle_r\bigr)
+\sum_{\substack{0\leq r\leq p\\ \text{$r$ odd}}}2\bigl(v\extprod \langle X\rangle_r\bigr) \:. \]
Moreover, each summand on the right-hand side is homogeneous: for even~$r$ it lies in~$\bb{C}\ell^{[r-1]}(V)$, while for odd~$r$ it lies in~$\bb{C}\ell^{[r+1]}(V)$. Since
\[ \bb{C}\ell(V)=\bigoplus_{r=0}^p \bb{C}\ell^{[r]}(V) \]
is a direct sum, these homogeneous contributions are linearly independent. In particular, there is no cancellation between the contributions coming from different grades of~$X$.

\subsection{The Explicit Form of the Generators} \label{secgen}
Conditions~\eqref{eq:A1} and~\eqref{eq:A2} together imply that generators of~$\mathfrak{g}$ take one of two forms. We again let~$p=\dim V$ and~$q = \dim\mathfrak{g}$.
It will be convenient to work in a specific basis of~$\g$ of the following form.

\begin{Def} {\bf{(Type I and Type II generators)}}
\label{def:types}
The basis of~$\mathfrak{g}$ splits into:
\begin{align}
  g_j &= \begin{pmatrix}\gamma_j + A_j & 0 \\ 0 & -\gamma_j + A_j\end{pmatrix}
    && \text{{\rm{(Type I)}}},  &&j=1,\ldots,p,
    \label{eq:typeI}\\
  g_j &= \begin{pmatrix}A_j & 0 \\ 0 & A_j\end{pmatrix}
    && \text{{\rm{(Type II)}}}, && j=p+1,\ldots,q,
    \label{eq:typeII}
\end{align}
where~$\gamma_1,\ldots,\gamma_p\in V$ are orthonormal so that~$\{\gamma_j,\gamma_k\}=2\delta_{jk}$,
and the~$A_j\in\mathrm{End}(H)$ are symmetric matrices to be determined.
We refer to the~$\gamma_j$ and~$A_j$ as the {\bf{$\Gamma$-odd}} and {\bf{$\Gamma$-even
 potentials}}, respectively.
\end{Def} \noindent
Thus for the Type I generators, the~$\Gamma$-odd part is~${g_j}^{(-)} = \mathrm{diag}(\gamma_j,-\gamma_j)$. Therefore, $\gamma_j\in V\subset\bb{C}\ell^{[1]}(V)$, satisfying~\eqref{eq:A2}. The~$\Gamma$-even contribution~$A_j$ appears with the same sign in both diagonal blocks (making it~$\Gamma$-even), regardless of its Clifford grade.
For the Type II generators, on the other hand, ${g_j}^{(-)} = 0\in V$, so~\eqref{eq:A2} is trivially satisfied.

The fact that~$\mathfrak{g}$ is closed under the Lie bracket means that for all~$j,k \in \{1,\ldots q\}$, their commutator
\begin{align}
i \big[g_j, g_k \big] &= \begin{pmatrix} i \big[ \gamma_j, A_k \big] + i \big[ A_j, \gamma_k \big] & 0 \\ 0 &
-i \big[ \gamma_j, A_k \big] - i \big[ A_j, \gamma_k \big] \end{pmatrix} && \text{$\leftarrow$ $\Gamma$-odd} \label{codd} \\
&\quad\: + \begin{pmatrix} i \big[ \gamma_j, \gamma_k \big] + i
\big[ A_j, A_k \big] & 0 \\ 0
& i \big[ \gamma_j, \gamma_k \big] + i \big[ A_j, A_k \big] \end{pmatrix} && \text{$\leftarrow$ $\Gamma$-even} \label{ceven}
\end{align}
must be a linear combinations of the matrices in~\eqref{eq:typeI} and~\eqref{eq:typeII}.

Let us evaluate in detail what these conditions mean. We begin with the $\Gamma$-odd component~\eqref{codd}, which gives rise to the conditions
\[ i \big[ \gamma_j, A_k \big] + i \big[ A_j, \gamma_k \big] \in V \:, \]
to be satisfied for all~$j,k \in \{1,\ldots q\}$.
Since~$[A_j,\gamma_k] = -[\gamma_k,A_j]$, this can be written as 
\beq \label{cgen}
i[\gamma_j,A_k] - i[\gamma_k,A_j]\in V \qquad
\text{for all~$j,k \in \{1,\ldots q\}$}\:.
\eeq
Therefore, the difference of two commutators must be a 1-vector.

Clearly, these conditions are trivially satisfied if both~$j,k >p$. Therefore, we may assume that~$j \leq p$, and consider the two cases~$k \leq p$ and~$k>p$. 

In order to further simplify the analysis of~\eqref{cgen}, we note that, for any~$C \in \Comm(V)$ and~$\alpha\in\textrm{End}(H)$, the defining property of
$\Comm(V)$ gives
\[  [v,\,C\alpha] = C[v,\alpha]  \qquad \text{for all } v \in V \:. \]
Thus multiplication by an element of~$\Comm(V)$ does not change how commutators with~$V$ behave; it only contributes an overall commuting factor.
Accordingly, if
\[ A_k = \sum_\ell C_{k,\ell}\,\alpha_{k,\ell},
\qquad
C_{k,\ell}\in \Comm(V),\qquad \alpha_{k,\ell}\in \bb{C}\ell(V) \:, \]
then
\[ [\gamma_j,A_k]
= \sum_\ell C_{k,\ell}\,[\gamma_j,\alpha_{k,\ell}] \:. \]
This shows that the constraint~\eqref{cgen}
is determined entirely by the~$\bb{C}\ell(V)$-parts of the~$A_k$. Therefore, for the purpose of determining which Clifford grades are compatible with the vector-commutator constraints, it is sufficient to analyze the Clifford factors, and suppress the~$\Comm(V)$ factors. Once the admissible grades have been determined, the resulting elements may be multiplied by arbitrary elements of~$\Comm(V)$. Commutant factors do not change the resulting Clifford grade, but they must still be retained when checking the full Lie closure of the resulting generators.

\subsubsection{Coefficients of the Bi-vector Terms} \label{sec:bivectorcoefficients} The preceding discussion shows that commutant factors may be suppressed when determining the possible Clifford grades. They cannot, however, be chosen arbitrarily once the full constraint~\eqref{cgen} is imposed. We now consider the coefficients of the bi-vector terms. Decompose the representation of~$\bb{C}\ell(V)$ into irreducible components, grouping equivalent components together. On each resulting block, the Clifford algebra acts on the irreducible factor, whereas its commutant acts on the corresponding multiplicity factor. The non-scalar matrix components of a commutant coefficient can therefore be compared separately. Elements in the center of~$\bb{C}\ell(V)$ belong to the Clifford factor and are already covered by the grade analysis. Fix one non-scalar component of a commutant coefficient and denote the corresponding bi-vectors by~$X_k\in\bb{C}\ell^{[2]}(V)$. For a Type~II generator, the corresponding component of condition~\eqref{cgen} gives 
\[ \gamma_j\intprod X_k=0 \qquad\text{for all }j\in\{1,\ldots,p\} \:, \]
and hence~$X_k=0$. For the Type~I generators, the non-scalar component of~\eqref{cgen} gives 
\beq 
\gamma_j\intprod X_k=\gamma_k\intprod X_j \qquad\text{for all }j,k\in\{1,\ldots,p\}. \label{eq:bivectorcoeffsymmetry} 
\eeq 
Set 
\[ T:=\sum_{l=1}^p\gamma_l\extprod X_l \:. \]
Using~\eqref{eq:bivectorcoeffsymmetry} and the identity 
\[ \sum_{l=1}^p\gamma_l\extprod \big(\gamma_l\intprod X_k\big)=2X_k \:, \]
one obtains 
\begin{align*}
\gamma_k\intprod T &=X_k-\sum_{l=1}^p\gamma_l\extprod \big(\gamma_k\intprod X_l\big)\\ 
&=X_k-\sum_{l=1}^p\gamma_l\extprod \big(\gamma_l\intprod X_k\big)\\ 
&=-X_k \:. 
\end{align*}
Thus~$X_k=-\gamma_k\intprod T$. Since successive interior products anti-commute, it follows that 
\begin{align*}
\gamma_j\intprod X_k &=-\gamma_j\intprod\big(\gamma_k\intprod T\big)\\ 
&=\gamma_k\intprod\big(\gamma_j\intprod T\big)\\ 
&=-\gamma_k\intprod X_j \:. 
\end{align*} 
Together with~\eqref{eq:bivectorcoeffsymmetry}, this gives $\gamma_j\intprod X_k=0$ for all~$j,k$, and therefore~$X_k=0$. Consequently, a bi-vector term cannot carry a non-scalar coefficient on a multiplicity factor. Its scalar coefficient can be absorbed into the bi-vector itself, so that the surviving bi-vector contribution lies in~$\bb{C}\ell^{[2]}(V)$. This conclusion is specific to the bi-vector terms. For the grade-$(p-1)$ twist considered in Section~\ref{sectwist}, condition~\eqref{cgen} requires the coefficient matrix to be symmetric in its Clifford vector indices, but does not require its entries in~$\Comm(V)$ to be scalar.

\subsection{Grade Restrictions for the $\Gamma$-Even Potentials}
\label{sec:graderestrict}
We shall now apply condition~\eqref{cgen} systematically to determine which Clifford grades can appear in the potentials~$A_k$.

\subsubsection{Type II Generators}

For~$j\leq p$ and~$k>p$ (Type II), equation~\eqref{cgen} reduces to the individual condition
\begin{equation}
\label{eq:individual}
  [\gamma_j, A_k] \;\in\; V \quad\text{for all }j=1,\ldots,p.
\end{equation}
This is the condition that commuting~$A_k$ with any 1-vector in~$V$ produces a 1-vector.  We can now apply the Grade-Shift Lemma directly to classify which grades in~$A_k$ are compatible with this.

\begin{Corollary} {\bf{(Type II grade restriction)}}
\label{cor:typeII}
For~$k\in\{p+1,\ldots,q\}$, the $\Gamma$-even potential must lie in
\[ A_k \;\in\; \bb{C}\ell^{[2]}(V) \;\oplus\; \Comm(V) \:. \]
\end{Corollary}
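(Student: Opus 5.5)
We will prove Corollary~\ref{cor:typeII} by expanding~$A_k$ in Clifford grades and applying the Grade-Shift Lemma~\ref{lem:gradeshift} grade by grade. As explained before Section~\ref{sec:bivectorcoefficients}, commutant factors do not affect commutators with~$V$. We therefore first write
\[ A_k=\sum_\ell C_{k,\ell}\,\alpha_{k,\ell},\qquad C_{k,\ell}\in\Comm(V),\ \alpha_{k,\ell}\in\bb{C}\ell(V) \:. \]
After decomposing the representation into irreducible blocks, we group terms so that on each block the Clifford part is a sum of homogeneous components~$X=\sum_r\langle X\rangle_r$. Condition~\eqref{eq:individual} then becomes~$[\gamma_j,X]\in V$ for all~$j$, and by linearity in~$\gamma_j$ this is~$[v,X]\in V$ for all~$v\in V$.

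By the grade-wise formula after Lemma~\ref{lem:gradeshift}, $[v,X]$ splits into homogeneous pieces. For even~$r$ the piece is $2\,v\intprod\langle X\rangle_r$, of grade~$r-1$; for odd~$r$ it is $2\,v\extprod\langle X\rangle_r$, of grade~$r+1$. These pieces are linearly independent, so~$[v,X]\in\bb{C}\ell^{[1]}(V)$ forces every piece of grade other than one to vanish. The following cases arise.

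\begin{itemize}
\item For even~$r\ge4$ we need $v\intprod\langle X\rangle_r=0$ for all~$v$. Since an element annihilated by all interior products has grade zero, this gives~$\langle X\rangle_r=0$.
\item For odd~$r$ with~$r+1\neq p$ we need $v\extprod\langle X\rangle_r=0$ for all~$v$. Wedging an element of grade~$r<p$ with every vector gives zero only if the element vanishes, so~$\langle X\rangle_r=0$. The case~$r=1$ belongs here as well (since~$2\neq1$); if~$p=2$ it falls under the next item.
\item Grade~$0$ and the top grade~$r=p$ are harmless. Grade~$0$ is a scalar, and if~$p$ is odd the top grade is~$\omega\in\Comm(V)$. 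Odd~$r=p-1$ with~$p$ even gives~$v\extprod\langle X\rangle_{p-1}\in\bb{C}\ell^{[p]}(V)$, and this must vanish unless~$p=1$.
\item Grade two survives, because~$[v,\langle X\rangle_2]=2\,v\intprod\langle X\rangle_2\in V$.
\end{itemize}

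The surviving grades are therefore~$0$, $2$ and, for odd~$p$, $p$. The grade-$0$ and grade-$p$ parts lie in~$\Comm(V)$, and the grade-$2$ part is a bi-vector possibly multiplied by a commutant coefficient. The Type~II computation in Section~\ref{sec:bivectorcoefficients} shows that non-scalar commutant components of bi-vector terms vanish. Absorbing the scalar coefficients gives~$A_k\in\bb{C}\ell^{[2]}(V)\oplus\Comm(V)$.

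The main obstacle is the nondegeneracy step: an element~$\alpha$ of grade~$r$ with~$v\intprod\alpha=0$ for all~$v$ (when~$r\ge1$), or with~$v\extprod\alpha=0$ for all~$v$ (when~$r<p$), must vanish. We would prove this in the orthonormal basis. Take a basis monomial~$\gamma_I$ occurring in~$\alpha$ with nonzero coefficient. For the interior product, contracting with some~$\gamma_i$, $i\in I$, gives~$\pm\gamma_{I\setminus\{i\}}$, which no other monomial produces. For the exterior product, wedging with some~$\gamma_i$, $i\notin I$, gives~$\pm\gamma_{I\cup\{i\}}$, which again no other monomial produces, so the coefficient cannot cancel. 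A secondary subtlety is ensuring that commutant factors which are not in~$\bb{C}\ell(V)$ do not let a mixed term escape this grade analysis. We would handle this by the block decomposition into Clifford factor tensor multiplicity factor described in Section~\ref{sec:bivectorcoefficients}, where the commutant acts only on the multiplicity factor and the grade argument applies componentwise.
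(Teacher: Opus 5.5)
Your proposal is correct and follows the paper's own route. You expand the $\Gamma$-even potential in Clifford grades and apply the Grade-Shift Lemma to show that every grade except $0$, $2$ and (for odd $p$) the central pseudoscalar gives a commutator outside $V$ that must vanish, leaving $A_k\in\bb{C}\ell^{[2]}(V)\oplus\Comm(V)$. You are more explicit than the paper in two places: you prove the nondegeneracy of $v\intprod$ and $v\extprod$, and you invoke the Type~II bi-vector-coefficient argument to exclude non-scalar commutant coefficients; your separate case $r=p-1$ (with its vacuous ``unless $p=1$'') is unnecessary, since every odd $r<p$ is handled uniformly.
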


\begin{proof}
We go through the grades of~$A_k$ within~$\bb{C}\ell(V)$. By the direct-sum decomposition of~$\bb{C}\ell(V)$ into homogeneous grade subspaces, the different grade contributions can be treated independently.
\begin{itemize}[leftmargin=1.5em]
\item \textit{Grade~$r=0$ (scalar):} $[\gamma_j, \lambda I] = 0 \in V$.  Always admissible.
\item \textit{Grade~$r=1$:} By the Grade-Shift Lemma (odd~$r$):
\[ [\gamma_j, A_k^{[1]}] = 2(\gamma_j \extprod A_k^{[1]}) \;\in\; \bb{C}\ell^{[2]}(V) \:. \]
For this to lie in~$\bb{C}\ell^{[1]}(V)=V$, it must be zero. Hence~$A_k^{[1]}=0$.
\item \textit{Grade~$r=2$ (bi-vector):} By the Grade-Shift Lemma (even~$r$):
\[ [\gamma_j, A_k^{[2]}] = 2(\gamma_j\intprod A_k^{[2]}) \;\in\; \bb{C}\ell^{[1]}(V) = V \:. \]
The output automatically lies in~$V$.  bi-vectors are always admissible.
\item \textit{Grade~$r$ even, $r\geq 4$:} By the Grade-Shift Lemma (even~$r$):
\[ [\gamma_j, A_k^{[r]}] = 2(\gamma_j\intprod A_k^{[r]}) \;\in\; \bb{C}\ell^{[r-1]}(V) \:. \]
Since~$r-1\geq 3$, this is not in~$V$ unless zero. Hence~$A_k^{[r]}=0$.
\item \textit{Grade~$r$ odd, $r\geq 3$:} By the Grade-Shift Lemma (odd~$r$):
\[ [\gamma_j, A_k^{[r]}] = 2(\gamma_j\extprod A_k^{[r]}) \;\in\; \bb{C}\ell^{[r+1]}(V) \:. \]
Since~$r+1\geq 4$, this is not in~$V$ unless zero.  Hence~$A_k^{[r]}=0$.
\item \textit{Grade~$r=p$ ($p$ odd):} $\omega$ is
central in~$\bb{C}\ell(V)$ for~$p$ odd, so~$[\gamma_j,\omega]=0\in V$. The pseudoscalar is admissible, but~$\omega\in\Comm(V)$ for~$p$ odd, so it is already included in the~$\Comm(V)$ factor.
\end{itemize}
Combining the above results, within~$\bb{C}\ell(V)$, only grade 0 and grade 2 survive. Including~$\Comm(V)$ gives~$A_k\in\bb{C}\ell^{[2]}(V)\oplus\Comm(V)$.
\end{proof}

\subsubsection{Type I Generators}
For~$j,k\leq p$ (both Type I), we need to take into account both summands in~\eqref{cgen}.
This is a much weaker condition than~\eqref{eq:individual}. Again, we analyze it grade by grade. Since
\[ [\gamma_j,A_k]-[\gamma_k,A_j]\in V=\bb{C}\ell^{[1]}(V) \:, \]
its homogeneous component of grade~$a$ must vanish for every~$a\neq 1$:
\beq \label{gradea}
\big\langle[\gamma_j,A_k]-[\gamma_k,A_j]\big\rangle_a=0,
  \quad\text{for all }a\neq 1.
\eeq
Write
\[ A_k=\sum_{r=0}^p A_k^{[r]} \:. \]
By Lemma~\ref{lem:gradeshift}, commutation with a $1$-vector reverses grade parity on each homogeneous component~$A_k^{[r]}$: even grade~$A_k^{[r]}$ contribute only to odd output grades, while odd grade~$A_k^{[r]}$ contribute only to even output grades. Thus a contribution coming from~$A_k^{[r]}$ with~$r$ even can never cancel one coming from an~$A_j^{[l]}$ with~$l$ odd. Moreover, for a fixed output grade~$a$, there is exactly one possible source grade: namely~$r=a+1$ if~$a$ is odd, and~$r=a-1$ if~$a$ is even. Therefore, evaluating~\eqref{gradea} in grade~$a$ compares only equal grades of both~$A_k$ and~$A_j$.

\begin{Lemma} {\bf{(Even grades~$r\geq 4$ vanish)}}
\label{lem:even4}
For all~$k\in\{1,\ldots,p\}$ and even~$r\geq 4$,
\[ A_k^{[r]}=0 \:. \]
\end{Lemma}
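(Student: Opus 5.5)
The plan is to reduce the claim to a symmetric system of interior-product equations and then reuse the ``potential'' trick of Section~\ref{sec:bivectorcoefficients}, which handled grade~$2$, for arbitrary even grade~$r\geq 4$.

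\emph{Reduction.} Fix an even grade~$r\geq 4$ and set~$X_k:=A_k^{[r]}$ for~$k=1,\ldots,p$. By Lemma~\ref{lem:gradeshift}, $[\gamma_j,X_k]=2\,\gamma_j\intprod X_k\in\bb{C}\ell^{[r-1]}(V)$. As explained before~\eqref{gradea}, the output grade~$a=r-1$ receives contributions only from the grade-$r$ parts of~$A_j$ and~$A_k$. Since~$a=r-1\geq 3\neq 1$, condition~\eqref{gradea} yields
\beq \label{eq:planrsym}
\gamma_j\intprod X_k=\gamma_k\intprod X_j\qquad\text{for all }j,k\in\{1,\ldots,p\}\:.
\eeq
Commutant coefficients are treated as in Section~\ref{sec:bivectorcoefficients}. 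One decomposes into isotypic blocks and compares each matrix component on the multiplicity factor separately. Alternatively, one notes that~$\intprod$ and~$\extprod$ are~$\Comm(V)$-linear, so the argument below runs verbatim with~$\Comm(V)$-valued coefficients. Either way, it suffices to show that~\eqref{eq:planrsym} forces~$X_k=0$.

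\emph{Key identity and potential.} The main algebraic input is the grade-$r$ analogue of the identity used for bi-vectors:
\[ \sum_{l=1}^p\gamma_l\extprod\big(\gamma_l\intprod X\big)=r\,X\qquad\text{for }X\in\bb{C}\ell^{[r]}(V)\:. \]
This is checked on basis blades~$\gamma_{i_1}\cdots\gamma_{i_r}$, where only the~$r$ indices~$l\in\{i_1,\ldots,i_r\}$ contribute, each giving back the blade. I also need the derivation rule
\[ \gamma_k\intprod(\gamma_l\extprod Y)=\delta_{kl}Y-\gamma_l\extprod(\gamma_k\intprod Y)\:, \]
together with the anticommutativity~$\gamma_j\intprod(\gamma_k\intprod Y)=-\gamma_k\intprod(\gamma_j\intprod Y)$. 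Define~$T:=\sum_l\gamma_l\extprod X_l\in\bb{C}\ell^{[r+1]}(V)$. Using~\eqref{eq:planrsym} to swap indices,
\[ \gamma_k\intprod T=X_k-\sum_l\gamma_l\extprod(\gamma_l\intprod X_k)=(1-r)\,X_k\:. \]
Hence~$X_k=-\frac{1}{r-1}\,\gamma_k\intprod T$. This step is legitimate precisely because~$r\neq 1$.

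\emph{Conclusion.} It follows that~$\gamma_j\intprod X_k=-\frac{1}{r-1}\,\gamma_j\intprod\gamma_k\intprod T$ is antisymmetric in~$(j,k)$. Combined with the symmetry~\eqref{eq:planrsym}, this gives~$\gamma_j\intprod X_k=0$ for all~$j,k$. Applying the key identity once more gives~$rX_k=\sum_l\gamma_l\extprod(\gamma_l\intprod X_k)=0$, hence~$X_k=0$ since~$r\neq 0$. The edge case~$r=p$ (possible only when~$p$ is even) is harmless: there~$T=0$ automatically and the same conclusion holds.

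I expect the main obstacle to be verifying the sign conventions carefully. One must check that the derivation rule for~$\intprod$ over~$\extprod$ holds with the normalization of Definition~\ref{def:intext}, and that the counting identity produces exactly the factor~$r$. Both are checked on orthonormal basis blades. The rest is the same linear-algebra argument as for bi-vectors.
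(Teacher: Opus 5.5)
Your proof is correct and follows the paper's argument essentially step for step: the symmetry $\gamma_j\intprod X_k=\gamma_k\intprod X_j$ from the grade-$(r-1)$ component, the auxiliary $(r+1)$-vector $T=\sum_l\gamma_l\extprod X_l$ with $\gamma_k\intprod T=(1-r)X_k$, the antisymmetry of iterated contractions, and the final identity $rX_k=0$. Your remarks on the edge case $r=p$ and on $\Comm(V)$-valued coefficients go beyond the paper, which works purely inside $\bb{C}\ell^{[r]}(V)$; note only that for odd $p$ the ``$\Comm(V)$-linear'' reading needs a fixed presentation, because $\omega\in\bb{C}\ell(V)\cap\Comm(V)$ identifies $\omega\,\bb{C}\ell^{[r]}(V)$ with $\bb{C}\ell^{[p-r]}(V)$.
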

\begin{proof}
Fix an even~$r\geq 4$, and write
\[ A_k:=A_k^{[r]}\in \bb{C}\ell^{[r]}(V) \:. \]
By the Grade-Shift Lemma,
\[ [\gamma_j,A_k]=2 \:\big(\gamma_j\intprod A_k \big) \in \bb{C}\ell^{[r-1]}(V) \:. \]
Since~$r-1\geq 3>1$, condition~\eqref{gradea} implies
\begin{equation}
\label{eq:symcond_even}
\gamma_j\intprod A_k=\gamma_k\intprod A_j
\qquad
\text{for all }j,k\in\{1,\ldots,p\}.
\end{equation}

Now define
\[ T:=\sum_{\ell=1}^p \gamma_\ell\extprod A_\ell \;\in\; \bb{C}\ell^{[r+1]}(V) \:. \]
For any multi-vector~$X\in\bb{C}\ell(V)$, one has
\[ \gamma_k\intprod(\gamma_\ell\extprod X)=\delta_{k\ell}\,X-
\gamma_\ell\extprod(\gamma_k\intprod X) \:. \]
Therefore,
\begin{align*}
\gamma_k\intprod T
&=
\sum_{\ell=1}^p\gamma_k\intprod(\gamma_\ell\extprod A_\ell)
= \sum_{\ell=1}^p
\Bigl(\delta_{k\ell}A_\ell-\gamma_\ell\extprod(\gamma_k\intprod A_\ell)\Bigr)
=
A_k-\sum_{\ell=1}^p \gamma_\ell\extprod(\gamma_k\intprod A_\ell).
\end{align*}
By~\eqref{eq:symcond_even},
\[ \gamma_k\intprod A_\ell=\gamma_\ell\intprod A_k \:, \]
so
\[ \gamma_k\intprod T=A_k-\sum_{\ell=1}^p \gamma_\ell\extprod(\gamma_\ell\intprod A_k) \:. \]

For any homogeneous~$X\in\bb{C}\ell^{[r]}(V)$,
\[ \sum_{\ell=1}^p \gamma_\ell\extprod(\gamma_\ell\intprod X)=rX \:, \]
since each grade~$r$ basis element is recovered once for each of its~$r$ indices.
Applying this to~$X=A_k$ gives
\[ \sum_{\ell=1}^p \gamma_\ell\extprod(\gamma_\ell\intprod A_k)=rA_k \:. \]
Hence
\[ \gamma_k\intprod T = A_k-rA_k = (1-r) \:A_k \:, \]
and therefore
\begin{equation}
\label{eq:Ak_from_T_even}
A_k=\frac{1}{1-r}\,\gamma_k\intprod T.
\end{equation}
Thus every~$A_k$ is obtained by contracting the same~$(r+1)$-vector~$T$.

Applying~$\gamma_j\intprod$ to~\eqref{eq:Ak_from_T_even}, we obtain
\[ \gamma_j\intprod A_k=\frac{1}{1-r}\,(\gamma_j\intprod)\gamma_k\intprod T \:. \]
But interior products anti-commute, so
\[ (\gamma_j\intprod)\gamma_k\intprod T=-(\gamma_k\intprod)\gamma_j\intprod T \:. \]
Using~\eqref{eq:Ak_from_T_even} again, this gives
\[ \gamma_j\intprod A_k=-\gamma_k\intprod A_j \:. \]
Comparing with the symmetry condition~\eqref{eq:symcond_even}, we conclude that
\[ \gamma_j\intprod A_k=0\qquad\text{for all }j,k \:. \]

Finally, fix~$k$. Since~$\gamma_\ell\intprod A_k=0$ for every~$\ell\in\{1,\ldots,p\}$, we have
\[ \sum_{\ell=1}^p \gamma_\ell\extprod(\gamma_\ell\intprod A_k)=0 \:. \]
But~$A_k\in\bb{C}\ell^{[r]}(V)$, so that
\[ \sum_{\ell=1}^p \gamma_\ell\extprod(\gamma_\ell\intprod A_k)=rA_k \:. \]
It follows that~$rA_k=0$. Since~$r\geq 4$, we conclude that~$A_k=0$ for all~$k$.
\end{proof}

\begin{Lemma} {\bf{(Odd grades~$1\leq r\leq p-3$ vanish)}}
\label{lem:odd}
For all~$k\in\{1,\ldots,p\}$ and all odd~$r$ with~$1\leq r\leq p-3$,
\[ A_k^{[r]}=0 \:. \]
\end{Lemma}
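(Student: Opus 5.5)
The plan is to mirror the proof of Lemma~\ref{lem:even4}, with the roles of interior and exterior products exchanged. Fix an odd~$r$ with~$1\leq r\leq p-3$ and write~$A_k:=A_k^{[r]}\in\bb{C}\ell^{[r]}(V)$. By the Grade-Shift Lemma~\ref{lem:gradeshift} (odd case),
\[ [\gamma_j,A_k]=2\,\big(\gamma_j\extprod A_k\big)\in\bb{C}\ell^{[r+1]}(V) \:. \]
Since~$r+1\geq 2$, this output grade differs from~$1$. By the discussion preceding Lemma~\ref{lem:even4}, it is fed only by grade~$r$ of the~$A$'s. Hence~\eqref{gradea} yields the symmetry condition
\beq \label{eq:symcond_odd}
\gamma_j\extprod A_k=\gamma_k\extprod A_j \qquad \text{for all } j,k\in\{1,\ldots,p\} \:.
\eeq
As explained in Section~\ref{secgen}, commutant factors can be suppressed. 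All identities below are linear and involve only interior and exterior products with vectors of~$V$, which commute with~$\Comm(V)$. Therefore the argument goes through verbatim if the coefficients are allowed to lie in~$\Comm(V)$.

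Next I would introduce the contracted object
\[ T:=\sum_{\ell=1}^p\gamma_\ell\intprod A_\ell\in\bb{C}\ell^{[r-1]}(V) \]
(for~$r=1$ this is a scalar). Two standard identities are needed. The first is the derivation rule
\[ \gamma_k\extprod(\gamma_\ell\intprod X)=\delta_{k\ell}X-\gamma_\ell\intprod(\gamma_k\extprod X) \:. \]
The second is the counting identity
\[ \sum_{\ell}\gamma_\ell\intprod(\gamma_\ell\extprod X)=(p-r)\,X \qquad \text{for } X\in\bb{C}\ell^{[r]}(V) \:, \]
which holds because each basis~$r$-vector can be extended by exactly~$p-r$ missing indices. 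Combining these with~\eqref{eq:symcond_odd} gives
\[ \gamma_k\extprod T=A_k-\sum_\ell\gamma_\ell\intprod\big(\gamma_\ell\extprod A_k\big)=(1-p+r)\,A_k \:. \]
Since~$r\leq p-3$, the factor~$1-p+r\leq -2$ is nonzero. Consequently~$A_k=\frac{1}{1+r-p}\,\gamma_k\extprod T$.

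Then exterior products anticommute, so
\[ \gamma_j\extprod A_k=\frac{1}{1+r-p}\,\gamma_j\extprod\gamma_k\extprod T=-\gamma_k\extprod A_j \:. \]
Together with~\eqref{eq:symcond_odd} this forces~$\gamma_j\extprod A_k=0$ for all~$j,k$. Summing the counting identity then gives~$(p-r)A_k=0$, and since~$p-r\geq 3$ we conclude that~$A_k=0$.

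The main point requiring care is verifying the two identities with the correct signs. The derivation rule for~$\intprod$ and~$\extprod$ must be checked, for instance on basis blades. The other delicate point is seeing exactly where the hypothesis~$r\leq p-3$ enters. It guarantees~$1+r-p\neq0$, which excludes~$r=p-1$ (the twist treated later). It also guarantees~$p-r\neq 0$, which excludes the pseudoscalar~$r=p$. These excluded cases are precisely the ones that survive in Theorem~\ref{thm:main}.
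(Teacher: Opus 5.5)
Your proposal is correct and follows the paper's proof essentially step for step. Both use the symmetry condition $\gamma_j\extprod A_k=\gamma_k\extprod A_j$ from the grade-$(r+1)$ component, the contraction $T=\sum_\ell\gamma_\ell\intprod A_\ell$, the reconstruction $A_k=-\frac{1}{p-r-1}\,\gamma_k\extprod T$ via the derivation and counting identities, the symmetric-versus-antisymmetric comparison, and the final conclusion $(p-r)A_k=0$. Your extra remark that every identity is $\Comm(V)$-linear agrees with the paper's convention in Section~\ref{secgen} of suppressing commutant factors.
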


\begin{proof}
Fix an odd~$r$ with~$1\leq r\leq p-3$, and write
\[ A_k:=A_k^{[r]}\in \bb{C}\ell^{[r]}(V) \:. \]
By the Grade-Shift Lemma,
\[ [\gamma_j,A_k]=2 \: \big(\gamma_j\extprod A_k \big) \in \bb{C}\ell^{[r+1]}(V) \:. \]
Since~$r+1\geq 2$ and~$r+1\leq p-2$, condition~\eqref{gradea} implies
\begin{equation}
\label{eq:symcond_odd}
\gamma_j\extprod A_k=\gamma_k\extprod A_j
\qquad
\text{for all }j,k\in\{1,\ldots,p\}.
\end{equation}

Now define
\[ T:=\sum_{\ell=1}^p \gamma_\ell\intprod A_\ell
\;\in\; \bb{C}\ell^{[r-1]}(V) \:. \]
For any multi-vector~$X\in\bb{C}\ell(V)$, one has the identity
\[ \gamma_k\extprod(\gamma_\ell\intprod X)=\delta_{k\ell}\,X-\gamma_\ell\intprod(\gamma_k\extprod X) \:. \]
Therefore,
\begin{align*}
\gamma_k\extprod T
&= \sum_{\ell=1}^p\gamma_k\extprod(\gamma_\ell\intprod A_\ell)
= \sum_{\ell=1}^p\Bigl(\delta_{k\ell}A_\ell-\gamma_\ell\intprod(\gamma_k\extprod A_\ell)\Bigr) \\
&= A_k-\sum_{\ell=1}^p \gamma_\ell\intprod(\gamma_k\extprod A_\ell) \:.
\end{align*}
By~\eqref{eq:symcond_odd},
\[ \gamma_k\extprod A_\ell=\gamma_\ell\extprod A_k \:, \]
so that
\[ \gamma_k\extprod T=A_k-\sum_{\ell=1}^p \gamma_\ell\intprod(\gamma_\ell\extprod A_k) \:. \]

For any homogeneous~$X\in\bb{C}\ell^{[r]}(V)$,
\[ \sum_{\ell=1}^p \gamma_\ell\intprod(\gamma_\ell\extprod X)=(p-r)X \:, \]
since each grade~$r$ basis element is recovered once for each of the~$p-r$ basis vectors which are not already present in it. Applying this to~$X=A_k$ gives
\[ \sum_{\ell=1}^p \gamma_\ell\intprod(\gamma_\ell\extprod A_k)=(p-r)A_k \:. \]
Hence
\[ \gamma_k\extprod T = A_k-(p-r)A_k = -(p-r-1)A_k \:, \]
and therefore
\begin{equation}
\label{eq:Ak_from_T_odd}
A_k=-\frac{1}{p-r-1}\: \gamma_k\extprod T.
\end{equation}
This is well-defined because~$r\leq p-3$, so~$p-r-1\geq 2$. Thus every~$A_k$ is obtained by exterior multiplication of the same $(r-1)$-vector~$T$.

Applying~$\gamma_j\extprod$ to~\eqref{eq:Ak_from_T_odd}, we obtain
\[ \gamma_j\extprod A_k=-\frac{1}{p-r-1}\,\gamma_j\extprod\gamma_k\extprod T \:. \]
But exterior products anti-commute, so
\[ \gamma_j\extprod\gamma_k\extprod T=-\gamma_k\extprod\gamma_j\extprod T \:. \]
Using~\eqref{eq:Ak_from_T_odd} again, this gives
\[ \gamma_j\extprod A_k=-\gamma_k\extprod A_j \:. \]
Comparing with the symmetry condition~\eqref{eq:symcond_odd}, we conclude that
\[ \gamma_j\extprod A_k=0\qquad\text{for all }j,k \:. \]

Finally, fix~$k$. Since~$\gamma_\ell\extprod A_k=0$ for every~$\ell\in\{1,\ldots,p\}$, we have
\[ \sum_{\ell=1}^p \gamma_\ell\intprod(\gamma_\ell\extprod A_k)=0 \:. \]
But~$A_k\in\bb{C}\ell^{[r]}(V)$, so that
\[ \sum_{\ell=1}^p \gamma_\ell\intprod(\gamma_\ell\extprod A_k)=(p-r)A_k \:. \]
It follows that~$(p-r)A_k=0$.
Since~$r\leq p-3$, we have~$p-r\geq 3$, and hence~$A_k=0$ for all~$k$.
\end{proof}
\begin{Remark} {\bf{(The case~$r=p-1$ must be treated separately)}} \label{rempm1}
{\em{
The previous lemma applies only for odd grades~$r$ with~$1\leq r\leq p-3$. For~$r=p-1$ (which is odd only when~$p$ is even), the argument breaks down. Indeed,
\[ [\gamma_j,A_k^{[p-1]}]=2 \:(\gamma_j\extprod A_k^{[p-1]})\in\bb{C}\ell^{[p]}(V)=\bb{C}\,\omega \:, \]
so the commutator takes values in the one-dimensional top-grade space. Moreover, in the proof of Lemma~\ref{lem:odd}, the reconstruction step
\[ A_k=-\frac{1}{p-r-1}\,\gamma_k\extprod T \]
is no longer available, because~$p-r-1=0$ when~$r=p-1$. Thus the previous symmetry-versus-antisymmetry argument gives no contradiction in this case. The grade~$r=p-1$ must therefore be analyzed separately. }} \QEDrem
\end{Remark}

\begin{Prp} {\bf{(Summary of possible grades for the~$\Gamma$-even potentials)}}
\label{prop:remaininggrades}
For Type II generators~$k\in\{p+1,\ldots,q\}$, one has
\[ A_k\in \bb{C}\ell^{[2]}(V)\oplus\Comm(V) \:. \]

For Type I generators~$k\in\{1,\ldots,p\}$, one has
\beq \label{type1cases}
A_k\in\begin{cases}
\bb{C}\ell^{[2]}(V)\oplus\Comm(V) & \text{if }p\text{ is odd},\\
\bb{C}\ell^{[2]}(V)\oplus\bigl(\Comm(V)\otimes \bb{C}\ell^{[p-1]}(V)\bigr)\oplus\Comm(V)
& \text{if }p\text{ is even} \:.
\end{cases}
\eeq
\end{Prp}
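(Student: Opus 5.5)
The proposition is essentially a bookkeeping statement that assembles the preceding results. I plan to treat the two types of generators separately and, for Type I, to go through the grade decomposition of~$A_k$ within~$\bb{C}\ell(V)$ one grade at a time. As in the discussion before Section~\ref{sec:bivectorcoefficients}, I will first suppress the~$\Comm(V)$ factors. Writing $A_k=\sum_\ell C_{k,\ell}\,\alpha_{k,\ell}$ with $C_{k,\ell}\in\Comm(V)$, the constraint~\eqref{cgen} sees only the Clifford factors, so it suffices to determine which grades $r$ of $\alpha_{k,\ell}$ may be nonzero. The commutant coefficients are then reinstated. For grade~$0$ this reproduces the summand $\Comm(V)$. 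For grade~$2$, Section~\ref{sec:bivectorcoefficients} shows that only scalar coefficients survive, which gives the summand $\bb{C}\ell^{[2]}(V)$.

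For Type II generators there is nothing new to prove. The statement is exactly Corollary~\ref{cor:typeII}.

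For Type I generators, grades $0$ and $2$ are always compatible with~\eqref{cgen}, by the Grade-Shift Lemma~\ref{lem:gradeshift}: scalars commute with $V$, and bi-vectors are mapped into $V$. The remaining grades are handled as follows.
\begin{itemize}
\item Even grades $r\geq 4$ are excluded by Lemma~\ref{lem:even4}.
\item Odd grades $1\leq r\leq p-3$ are excluded by Lemma~\ref{lem:odd}.
\item For $p$ odd, the odd grades $r\leq p$ left uncovered are $r=p-2$ and $r=p$. The top grade $r=p$ is spanned by~$\omega$, which is central and lies in $\Comm(V)$, so it is absorbed into the last summand.
\item For $p$ even, the only odd grade not covered is $r=p-1$ (Remark~\ref{rempm1}). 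It is not excluded, so it survives as the summand $\Comm(V)\otimes\bb{C}\ell^{[p-1]}(V)$. Its coefficients need not be scalar, in line with the closing comment of Section~\ref{sec:bivectorcoefficients}. For even $p$, the top grade $r=p$ is even and at least $4$ when $p\geq 4$, so it is killed by Lemma~\ref{lem:even4}.
\end{itemize}

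The step I expect to be the main obstacle is grade $r=p-2$ for $p$ odd, which Lemma~\ref{lem:odd} does not reach, since it requires $r\leq p-3$. By the Grade-Shift Lemma, $[\gamma_j,A^{[p-2]}]$ lies in grade $p-1$, so~\eqref{gradea} again forces the symmetry $\gamma_j\extprod A_k=\gamma_k\extprod A_j$. I will rerun the argument of Lemma~\ref{lem:odd} in this case. The reconstruction factor is $p-r-1=1\neq 0$, so $A_k=-\gamma_k\extprod T$ with $T=\sum_\ell\gamma_\ell\intprod A_\ell$. The antisymmetry of exterior products then yields $\gamma_j\extprod A_k=0$, and hence $(p-r)A_k=2A_k=0$. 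This means the hypothesis of Lemma~\ref{lem:odd} can in fact be weakened to $r\leq p-2$, and I will note this explicitly.

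Dually, I will dispose of the small-$p$ edge cases directly. For $p=1$ or $p=2$, the pseudoscalar has grade $p\leq 2$. For $p=2$, the top grade coincides with the bi-vectors, and grade $p-1=1$ is the twist term. For $p=3$, grade $p-2=1$ is the case just treated. I will check that in each case the listed decomposition~\eqref{type1cases} still holds.
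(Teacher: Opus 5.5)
Your proposal is correct and follows the paper's own bookkeeping. You use Corollary~\ref{cor:typeII} for Type~II generators; for Type~I you use Lemmas~\ref{lem:even4} and~\ref{lem:odd}, absorb~$\omega$ into~$\Comm(V)$ for odd~$p$, and keep the grade-$(p-1)$ twist term of Remark~\ref{rempm1} for even~$p$. Your separate treatment of grade~$r=p-2$ for odd~$p$, rerunning Lemma~\ref{lem:odd} with reconstruction factor~$p-r-1=1$ and final factor~$p-r=2$, is valid and actually needed: it closes a case the paper's proof passes over, since Lemma~\ref{lem:odd} is stated only for~$r\leq p-3$ and the pseudoscalar argument covers only~$r=p$.
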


\begin{proof}
The Type II statement is exactly Corollary~\ref{cor:typeII}.

For Type I generators, Lemma~\ref{lem:even4} rules out all even grades~$r\geq 4$, while Lemma~\ref{lem:odd} eliminates all odd grades~$1\leq r\leq p-3$. Grade~$r=2$ always survives, since the Grade-Shift Lemma gives
\[ [\gamma_j,A_k^{[2]}]\in V \]
automatically. The scalar part is included in~$\Comm(V)$.

If~$p$ is odd, then the pseudoscalar satisfies~$\omega\in\Comm(V)$, so no further independent term survives beyond the bi-vector and commutant parts. If~$p$ is even, the only additional remaining possibility is the grade-$(p-1)$ term singled out in Remark~\ref{rempm1}. This gives exactly the stated form.
\end{proof}

\subsection{Symmetry of the Twist Term} \label{sectwist}
In the case that~$p$ is odd, Proposition~\ref{prop:remaininggrades}
immediately gives part~(a) of Theorem~\ref{thm:main}.
Therefore, it remains to consider the case that~$p$ is even.
In this case, we need to specify the form of the Type I generators.
More precisely, we our goal is to show that the contribution to~$A_k$
of grade~$p-1$, which according to~\eqref{type1cases} lies in the space
\[ 
A_k^{[p-1]}\in \Comm(V)\otimes\bb{C}\ell^{[p-1]}(V) \:. \]
We refer to this potential as the {\em{twist term}}.
We note that
the Clifford grade subspace~$\bb{C}\ell^{[p-1]}(V)$ is a $p$-dimensional vector space, and a convenient basis is
\[ \{\omega\gamma_1,\ldots,\omega\gamma_p\} \:. \]
Therefore, each grade-$(p-1)$ term can be written uniquely as
\beq \label{eq:Akpminus1}
A_k^{[p-1]}=\sum_{l=1}^p B_k^{l}\:\omega\gamma_l\qquad\text{with}\qquad B^k_{l}\in\Comm(V) \:. \eeq
We refer to~$B_k^{l}$ as the {\em{twist matrix}}.

Since we are in the Type I case, the anti-symmetrized condition~\eqref{gradea} to the output grade~$a=p$, gives
\begin{equation}
\label{eq:toppgradecond}
\big\langle[\gamma_j,A_k^{[p-1]}]\big\rangle_p=\big\langle[\gamma_k,A_j^{[p-1]}]\big\rangle_p
\qquad \text{for all }j,k\in\{1,\ldots,p\} \:.
\end{equation}
Substituting~\eqref{eq:Akpminus1}, and using that each~$B_k^{l}\in\Comm(V)$ commutes with every~$\gamma_j$, we find
\[ \Big[ \gamma_j,A_k^{[p-1]} \Big]=\sum_{l=1}^p B_k^{l}\,[\gamma_j,\omega\gamma_l] \:. \]
Since~$p$ is even, $\omega\gamma_j=-\gamma_j\omega$ for all~$j$. Hence
\begin{align*}
[\gamma_j,\omega\gamma_l] &=\gamma_j\omega\gamma_l-\omega\gamma_l\gamma_j
= -\omega\gamma_j\gamma_l-\omega\gamma_l\gamma_j \\
&= -\omega\{\gamma_j,\gamma_l\} = -2\:\delta_{jl} \:\omega.
\end{align*}
Therefore,
\[ 
[\gamma_j,A_k^{[p-1]}]=-2\: B_k^{j} \,\omega \:. \]
This is a Clifford factor of top grade, namely~$\omega\in\bb{C}\ell^{[p]}(V)$, so taking the grade-$p$ projection does nothing,
\[ \big\langle[\gamma_j,A_k^{[p-1]}]\big\rangle_p=-2\:B_k^{j}\,\omega \:. \]
Similarly,
\[ \big\langle[\gamma_k,A_j^{[p-1]}]\big\rangle_p=-2\:B_j^{k}\,\omega \:. \]
Comparing the two sides of~\eqref{eq:toppgradecond}, we conclude that
\beq \label{eq:Bsymm}
B_k^{j}=B_j^{k}\qquad \text{for all }j,k\in\{1,\ldots,p\} \:.
\eeq
Thus the coefficient matrix~$(B_k^{j})$ is symmetric in the Clifford vector indices.

We conclude with an example which in which the twist term is {\em{not}} of the 
form~\eqref{Bsimpform}
\begin{Example} \label{excounter1} {\em{
We consider the case~$p=2$ and choose~$\mathscr{A}$ as the Clifford algebra
generated by the Pauli matrices~$\gamma_1=\sigma_1$ and~$\gamma_2=\sigma_2$.
Then the pseudoscalar matrix is~$\omega = i \sigma_1 \sigma_2 = - \sigma_3$.
We choose
\[ g_j = \begin{pmatrix} \gamma_j + A_j & 0 \\ 0 & -\gamma_j + A_j \end{pmatrix} \]
with~$j=1,2$ and
\beq \label{A12}
A_1 = -i \omega \gamma_2 = \gamma_1 \qquad \text{and} \qquad
A_2 = -i \omega \gamma_1 = \gamma_2 \:.
\eeq
Consequently,
\[ g_1 = \begin{pmatrix} 2 \gamma_1 & 0 \\ 0 & 0 \end{pmatrix} \qquad \text{and} \qquad
g_2 = \begin{pmatrix} 0 & 0 \\ 0 & -2 \gamma_2 \end{pmatrix} \:. \]
These potentials generate a trivial two-dimensional Lie algebra~$\g$.
Moreover, the potentials~$A_j$ clearly satisfy the conditions~\eqref{c1} and~\eqref{c2}.
But, comparing~\eqref{A12} with~\eqref{gform}, one sees that
\[ B \gamma_1 = -i \gamma_2 \qquad \text{and} \qquad B \gamma_2 = -i \gamma_1 \:, \]
showing that the twist operator is not of the form~\eqref{Bsimpform}.
}}
\QEDrem
\end{Example}

\subsection{Completing the Proof}
\Proof[Proof of Theorem~\ref{thm:main}.]
The theorem now follows by assembling the preceding results. Section~\ref{sec:blockdiag} gives block-diagonality, Section~\ref{secgen} gives an explicit form of the generators of~$\g$.
Section~\ref{sec:graderestrict} determines the admissible Clifford grades,
Section~\ref{sectwist} establishes that the twist term is symmetric~\eqref{eq:Bsymm}.
This completes the proof.
\QED

\section{Examples} \label{secexamples}
\subsection{The General Procedure}
\label{sec:general-procedure}
The goal of this section is to illustrate in simple examples how
our main theorem (in the formulas of Theorem~\ref{thmintro} or
Theorem~\ref{thm:main}) can be applied.
Physically interesting examples make it necessary to choose~$N=8$
and include a mixing of the generations; this will be studied systematically
in a forthcoming publication.

Before entering the analysis of specific systems, we explain the general procedure.
We work at the level of the gauge Lie algebra.  In the simplified setting with~$N$ sectors, the
chiral gauge group is
\[ \G=\U(N)_L\times \U(N)_R \:, \]
and the corresponding gauge potentials are pairs
\[ (A_L,A_R)\in \mathfrak{u}(N)\oplus\mathfrak{u}(N) \:. \]
Elements of~$\mathfrak{u}(N)$ are represented
by Hermitian matrices, and the Lie bracket is given by~$i[\cdot,\cdot]$.

The basic question is which choices of maximal Lie subalgebras~$\g^\text{adm}
\subset \mathfrak{u}(N)\oplus\mathfrak{u}(N)$ satisfy the Clifford compatibility 
conditions~\eqref{liecond}. Here our main theorem gives a systematic procedure
for answering this question: We choose a Clifford algebra~$\mathscr{A}$ of dimension~$p$
represented on~$\C^N$. For simplicity, we only discuss the case that~$p$ is even
(in the case that~$p$ is odd, the argument is similar, but one must also choose take into
account the twist operator~$B$).
Then the Lie algebra~$\g$ must be spanned by chiral potentials of the form
\begin{align*}
A_L &= \gamma+\sigma+C\:, \\
A_R &= -\gamma+\sigma+C\:,
\end{align*}
with
\[ \gamma\in V,\qquad \sigma\in \bb{C}\ell^{[2]}(\mathscr{A}),\qquad C\in\Comm(\mathscr{A}) \:. \]
In particular,
\[ \frac{1}{2}(A_L-A_R)=\gamma,\qquad 
\frac{1}{2}(A_L+A_R)=\sigma+C \:. \]
Consequently, as vector spaces of possible axial and vectorial parts,
\[ \g_A= \mathscr{A},\qquad \g_V= \bb{C}\ell^{[2]}(\mathscr{A})
\oplus \Comm(\mathscr{A}) \:. \]
For clarity, we point out that
the decomposition into vectorial and axial parts is not a direct sum of Lie
algebras, because Clifford one-vectors do not close under commutation. Lie
closure forces the accompanying bi-vector directions to be included.

The admissible gauge algebra is required to be maximal with respect to the
Clifford compatibility condition.  Thus a Clifford subspace~$V'$ which can be enlarged to a larger Clifford subspace~$V\subset\Symm(\C^N)$, with the corresponding Lie algebra still satisfying the Clifford compatibility condition, gives only an admissible subgroup of a larger admissible group.  Such a subspace should not be listed as a final admissible gauge group.

\subsection{The Case~$N=2$}
\label{sec:N2}

We must classify maximal Clifford subspaces
\[ V\subset \Symm(\C^2) \:. \]
The Hermitian $2\times2$-matrices decompose as
\[ 
\Symm(\C^2)=\R\,\1_2\oplus
\Span_{\R}\{\sigma_1,\sigma_2,\sigma_3\} \:. \]
where~$\sigma_1,\sigma_2,\sigma_3$ are the Pauli matrices. Any Clifford subspace of~$\Symm(\C^2)$ is either contained in the scalar subspace~$\R\,\1_2$ or else in the traceless Hermitian subspace~$\Span_{\R}\{\sigma_1,\sigma_2,\sigma_3\}$.

If~$V$ is contained in the scalar subspace, then maximality forces~$V$ to be 
\[ 
V=\R\,\1_2 \:. \]

The traceless Hermitian subspace is the real generating space for the standard Pauli representation of the complex Clifford algebra~$\bb{C}\ell(3)$:
\[ 
\frac{1}{2}\{\sigma_i,\sigma_j\}=\delta_{ij}\,\1_2 \:. \]
Thus, in the non-scalar case, maximality gives
\[ 
V=\Span_{\R}\{\sigma_1,\sigma_2,\sigma_3\} \:. \]
Proper non-zero subspaces, such as~$\R\sigma_3$ or~$\Span_{\R}\{\sigma_1,\sigma_2\}$, give admissible subalgebras but are not maximal.

\subsubsection{The Central Case}
\label{sec:N2scalar}

First take
\[ \mathscr{A}=\R\,\1_2 \:. \]
This is a one-dimensional Clifford space, so it generates~$\bb{C}\ell(1)$. Since there
is only one Clifford generator,
\[ \bb{C}\ell^{[2]}(\mathscr{A})=0 \:. \]
Moreover,
\[ \Comm(\mathscr{A})=\mathfrak{u}(2) \:. \]
Since~$p=1$ is odd, the space of axial potentials is
\[ \g_A=\R\,\1_2\simeq\mathfrak{u}(1)_A \:, \]
whereas the space of vectorial potentials is
\[ \g_V=\mathfrak{u}(2)_V \:. \]
Therefore
\[ 
\g^{\text{adm}}=\mathfrak{u}(2)_V\oplus\mathfrak{u}(1)_A \:. \]
Splitting
\[ \mathfrak{u}(2)_V=\mathfrak{su}(2)_V\oplus\mathfrak{u}(1)_V \]
and subsequently recombining,
\[ \mathfrak{u}(1)_V\oplus\mathfrak{u}(1)_A\simeq\mathfrak{u}(1)_L\oplus\mathfrak{u}(1)_R \:, \]
we obtain
\[ 
\g^{\text{adm}}=\mathfrak{su}(2)_V\oplus\mathfrak{u}(1)_L\oplus\mathfrak{u}(1)_R \:. \]
Thus, the admissible gauge group is,
\[ 
\G^{\text{adm}}=\SU(2)_V\times\U(1)_L\times\U(1)_R \:. \]

\subsubsection{The Full Pauli Case}
\label{sec:N2Pauli}

The maximal non-central Clifford subspace is
\[ \mathscr{A}=\Span_{\R}\{\sigma_1,\sigma_2,\sigma_3\} \:, \]
which generates~$\bb{C}\ell(3)$. Since~$p=3$ is odd, we set~$B=0$. The Hermitian real form
of the bi-vectors is the Pauli space,
\[ \bb{C}\ell^{[2]}(\mathscr{A})_{\text{Herm}}
=\Span_{\R}\{i\sigma_1\sigma_2,i\sigma_2\sigma_3,i\sigma_3\sigma_1\}
=\Span_{\R}\{\sigma_1,\sigma_2,\sigma_3\} \:, \]
up to signs. Moreover, within~$\Symm(\C^2)$,
\[ \Comm(\mathscr{A})=\R\,\1_2 \:. \]
Therefore
\[ \g_A=\mathscr{A}\simeq\mathfrak{su}(2)_A,
\qquad
\g_V\simeq\mathfrak{su}(2)_V\oplus\mathfrak{u}(1)_V \:. \]
The vectorial and axial Pauli directions can equivalently be rewritten as independent left- and right-handed generators.  Let~$t_i$ denote a basis element of the Pauli~$\mathfrak{su}(2)$, and write an element of~$\mathfrak{u}(2)_L\oplus\mathfrak{u}(2)_R$ as a pair~$(A_L,A_R)$.  Then the
vectorial and axial generators are respectively
\[ (t_i,t_i) \qquad\text{and}\qquad (t_i,-t_i) \:. \]
Taking linear combinations gives
\begin{align*}
(t_i,0) &= \frac{1}{2}\big((t_i,t_i)+(t_i,-t_i)\big) \\
(0,t_i) &= \frac{1}{2}\big((t_i,t_i)-(t_i,-t_i)\big) \:.
\end{align*}
Thus the same traceless generators may be described either in terms of vectorial
and axial Pauli directions, or in terms of independent left- and right-handed
Pauli directions.  In the latter basis, the Lie algebra is
\[ \mathfrak{su}(2)_L\oplus\mathfrak{su}(2)_R \:. \]
The scalar commutant remains vectorial. We therefore obtain
\[ 
\g^{\text{adm}}=\mathfrak{u}(1)_V\oplus\mathfrak{su}(2)_L\oplus\mathfrak{su}(2)_R \:. \]
The admissible gauge group is therefore,
\[ 
\G^{\text{adm}}=\U(1)_V\times\SU(2)_L\times\SU(2)_R \:. \]

\subsubsection{Summary of the~$N=2$ Result}
\label{sec:N2summary}

We have found that maximality leaves precisely two possibilities for
admissible gauge groups:
\[ %
\G^{\text{adm}}=
\left\{
\begin{array}{ll}
\SU(2)_V\times\U(1)_L\times\U(1)_R,
& \text{if } V=\R\,\1_2, \\
\U(1)_V\times\SU(2)_L\times\SU(2)_R,
& \text{if } V=\Span_{\R}\{\sigma_1,\sigma_2,\sigma_3\} \:.
\end{array} 
\right. \]

\subsection{The Scalar and Pauli Extensions for General~$N$}
\label{sec:scalar-pauli-extensions}

The two maximal possibilities found for~$N=2$ have natural extensions to higher values of~$N$.  The scalar extension exists for every~$N$, whereas the Pauli extension exists only for even~$N$.  

\subsubsection{The Scalar Extension}
For any~$N \in \N$, one may take the central Clifford subspace
\[ 
\mathscr{A}=\R\,\1_N \:. \]
This is a one-dimensional Clifford space, so it generates~$\bb{C}\ell(1)$.  Since
there is only one Clifford generator,
\[ \bb{C}\ell^{[2]}(\mathscr{A})=0 \:. \]
Moreover,
\[ \Comm(\mathscr{A})=\mathfrak{u}(N) \:, \]
and because~$p=1$ is odd, the twist term may be omitted ($B=0$).  Therefore
\[ \g_A=\R\,\1_N\simeq\mathfrak{u}(1)_A,\qquad\g_V=\mathfrak{u}(N)_V \:. \]
It follows that
\[ \g^{\text{adm}}=\mathfrak{u}(N)_V\oplus\mathfrak{u}(1)_A \:. \]
Splitting the vectorial algebra as
\[ \mathfrak{u}(N)_V=\mathfrak{su}(N)_V\oplus\mathfrak{u}(1)_V \]
and recombining
\[ \mathfrak{u}(1)_V\oplus\mathfrak{u}(1)_A\simeq\mathfrak{u}(1)_L\oplus\mathfrak{u}(1)_R \:, \]
we obtain
\[ \g^{\text{adm}}=\mathfrak{su}(N)_V\oplus\mathfrak{u}(1)_L\oplus\mathfrak{u}(1)_R \:. \]
Thus this family gives the admissible gauge group
\[ 
\G^{\text{adm}}=\SU(N)_V\times \U(1)_L\times \U(1)_R \:. \]

This is the direct generalization of the central~$N=2$ branch.

\subsubsection{The Pauli Extension for Even~$N$}

There is also a direct generalization of the full Pauli branch whenever~$N$ is even.
Write~$N=2r$, and identify
\[ \C^N\simeq \C^2\otimes\C^r \:. \]
We take
\[ 
\mathscr{A}=\Span_{\R}\{\sigma_1\otimes\1_r,\sigma_2\otimes\1_r,\sigma_3\otimes\1_r\} \:. \]
This is the real generating space for~$r$ identical copies of the standard Pauli
representation of~$\bb{C}\ell(3)$.  Since~$p=3$ is odd, we again set~$B=0$.
The Hermitian real form of the bi-vectors is again the Pauli space,
\[ \bb{C}\ell^{[2]}(\mathscr{A})_{\text{Herm}}=\Span_{\R}\{i\sigma_1\sigma_2\otimes\1_r,i\sigma_2\sigma_3\otimes\1_r,i\sigma_3\sigma_1\otimes\1_r\}
=\mathscr{A} \:, \]
up to signs.  The commutant acts on the multiplicity factor,
\[ \Comm(\mathscr{A})=\1_2\otimes\mathfrak{u}(r) \:. \]
Therefore,
\[ \g_A=\mathscr{A}\simeq\mathfrak{su}(2)_A,
\qquad
\g_V\simeq\mathfrak{su}(2)_V\oplus\mathfrak{u}(r)_V \:.  \]
As in the~$N=2$ case, the vectorial and axial Pauli directions can be rewritten
as independent left- and right-handed~$\mathfrak{su}(2)$ generators.  The algebra~$\mathfrak{u}(r)$ remains vectorial.  Hence
\[ \g^{\text{adm}}=\mathfrak{u}(r)_V\oplus\mathfrak{su}(2)_L\oplus\mathfrak{su}(2)_R \:. \]
At the group level, this gives
\[ 
\G^{\text{adm}}=\U(N/2)_V\times\SU(2)_L\times\SU(2)_R \:. \]

This is the direct generalization of the full Pauli~$N=2$ case. Here the Pauli matrices act on the first tensor factor, while the second tensor factor is a multiplicity space. The resulting commutant is therefore enlarged from~$\mathfrak{u}(1)$ to~$\mathfrak{u}(r)$.

\subsubsection{Summary of the Two Families}
The two~$N=2$ branches therefore extend to the following general families:
\[ 
\G^{\text{adm}}
=
\left\{
\begin{array}{ll}
\SU(N)_V\times\U(1)_L\times\U(1)_R,
& \text{if } \mathscr{A}=\R\,\1_N, \\[4pt]
\U(N/2)_V\times\SU(2)_L\times\SU(2)_R,
& \text{if } N=2r \text{ and }
\mathscr{A}=\Span_{\R}\{\sigma_i\otimes\1_r\}_{i=1}^3 \:.
\end{array}
\right. \]
For~$N=4$, these give
\[ \SU(4)_V\times\U(1)_L\times\U(1)_R \]
and
\[ \U(2)_V\times\SU(2)_L\times\SU(2)_R \:. \]
Thus, when we turn to the case~$N=4$, it remains only to determine whether
$\Symm(\C^4)$ admits additional maximal Clifford subspaces beyond these two
general families.

\subsection{The Case~$N=4$}
\label{sec:N4}
We now consider the case of four sectors, thus
\[ \G=\U(4)_L\times\U(4)_R \:. \]
Our task is to classify the maximal Clifford subspaces
\[ \mathscr{A}\subset\Symm(\C^4) \:. \]

The two extensions described in the previous section immediately give two admissible
possibilities. The scalar extension gives
\[ \mathscr{A}=\R\,\1_4 \:, \]
and hence
\[ 
\G^{\text{adm}} =
\SU(4)_V\times\U(1)_L\times\U(1)_R \:. \]
The Pauli extension, with~$\C^4\simeq\C^2\otimes\C^2$, gives
\[ \mathscr{A}=\Span_{\R}\{\sigma_1\otimes\1_2,\sigma_2\otimes\1_2,
\sigma_3\otimes\1_2\} \:, \]
and hence
\[ 
\G^{\text{adm}}
=
\U(2)_V\times\SU(2)_L\times\SU(2)_R \:. \]
These are the direct generalizations of the two maximal~$N=2$ cases. It remains to
check whether~$\Symm(\C^4)$ admits further maximal Clifford subspaces.

\subsubsection{The Irreducible~$\bb{C}\ell(5)$ Case}

In addition to the direct generalizations of the two maximal~$N=2$ cases, there is also a maximal five-dimensional Clifford subspace of~$\Symm(\C^4)$. Let
\[ \Gamma_i=\sigma_1\otimes\sigma_i,\qquad
\Gamma_4=\sigma_2\otimes\1_2,\qquad
\Gamma_5=\sigma_3\otimes\1_2,\qquad i=1,2,3 \:. \]
These matrices are Hermitian and satisfy
\[ \frac{1}{2}\{\Gamma_a,\Gamma_b\}=\delta_{ab}\,\1_4 \:. \]
Thus
\[ \mathscr{A}=\Span_{\R}\{\Gamma_1,\ldots,\Gamma_5\} \]
is the real generating space for the irreducible representation of
$\bb{C}\ell(5)$ on~$\C^4$.

Since~$p=5$ is odd, we may set~$B=0$. The Hermitian real form of the bi-vector space is
\[ \bb{C}\ell^{[2]}(\mathscr{A})_{\text{\rm Herm}}=\Span_{\R}\{i\Gamma_a\Gamma_b:1\leq a<b\leq 5\}\simeq \mathfrak{spin}(5) \]
with Lie bracket~$i[\cdot,\cdot]$. The representation is irreducible because the matrices~$\Gamma_a$ generate all matrices~$\sigma_i\otimes\1_2$ and~$\1_2\otimes\sigma_j$, and hence generate~$M_4(\bb{C})$.  Schur's lemma therefore gives
\[ \Comm_{\C}(\mathscr{A})=\C\1_4 \:. \]
Restricting to Hermitian matrices gives
\[ \Comm(\mathscr{A})=\R\1_4\simeq\mathfrak{u}(1)_V \:. \]

The non-abelian part is generated by the five axial one-vector directions together
with the ten vectorial bi-vector directions. These close under commutation as
\[ \mathscr{A}\oplus\bb{C}\ell^{[2]}(\mathscr{A})_{\text{\rm Herm}}\simeq\mathfrak{spin}(6)\simeq\mathfrak{su}(4) \:. \]
This copy of~$\mathfrak{su}(4)$ is not purely vectorial, nor can it be rewritten as
independent left- and right-handed~$\mathfrak{su}(4)$ factors. An admissible
generator has the form
\[ (A_L,A_R)=(\sigma+\gamma,\sigma-\gamma),\qquad\gamma\in \mathscr{A},\quad
\sigma\in\bb{C}\ell^{[2]}(\mathscr{A})_{\text{\rm Herm}} \:. \]
Thus only the bi-vector subalgebra is vectorial, while the Clifford one-vector directions
are axial. To obtain independent left- and right-handed~$\mathfrak{su}(4)$ factors,
one would need both vectorial and axial copies of the whole~$\mathfrak{su}(4)$, which are not present here. Therefore
\[ \g^{\text{\rm adm}}=\mathfrak{u}(1)_V\oplus\mathfrak{su}(4) \:, \]
where the~$\mathfrak{su}(4)$ factor denotes this mixed embedding. At the group level,
\[ 
\G^{\text{\rm adm}}=\U(1)_V\times\SU(4) \:. \]

\subsubsection{Non-central One-dimensional Clifford Subspaces}
Let~$Q\in\Symm(\C^4)$ be a Hermitian involution, $Q^2=\1_4$. Up to unitary
conjugation, $Q$ has the form
\[ Q=\text{diag}(\1_k,-\1_{4-k}) \:. \]
The cases~$k=0,4$ give the central scalar case. The case~$k=2$ is not maximal, because the two eigenspaces have equal dimension and further Clifford generators can be added. The remaining cases are~$k=1$ and~$k=3$. These are the same up to a change of basis and an overall sign of the generator. We therefore take, without loss of generality,
 \[ Q=\text{diag}(\1_3,-1) \:. \]

The space~$\mathscr{A}=\R Q$ is a one-dimensional Clifford subspace, so
\[ \bb{C}\ell^{[2]}(\mathscr{A})=0 \:. \]
The commutant consists of the Hermitian matrices preserving the~$+1$ and~$-1$ eigen\-spa\-ces of~$Q$, namely
\[ \Comm(\mathscr{A})=\mathfrak{u}(3)\oplus\mathfrak{u}(1) \:. \]
Since~$p=1$ is odd, the twist term may be omitted. Therefore
\[ \g_A=\R\,Q\simeq\mathfrak{u}(1)_A \:,
\qquad
\g_V=
\big(\mathfrak{u}(3)\oplus\mathfrak{u}(1)\big)_V \;. \]
Thus
\[ \g^{\text{adm}}=\big(\mathfrak{u}(3)\oplus\mathfrak{u}(1)\big)_V\oplus\mathfrak{u}(1)_A \:, \]
and at the group level
\[ 
\G^{\text{adm}}=\big(\U(3)\times\U(1)\big)_V\times\U(1)_A \:. \]
Here the axial~$\mathfrak{u}(1)_A$ is generated by~$Q$, not by the identity.  Thus,
although it may be recombined with the corresponding vectorial $Q$-direction, this
does not simply give the scalar factors~$\mathfrak{u}(1)_L\oplus\mathfrak{u}(1)_R$ as
in the central case.

\subsubsection{Summary of the~$N=4$ Result}

Other Clifford subspaces do exist, but they give admissible subalgebras of the maximal cases already listed.  For example, any even-dimensional Clifford subspace can be enlarged by adjoining the pseudoscalar.

The admissible gauge groups are therefore
\[ 
\G^{\text{adm}}=
\left\{
\begin{array}{ll}
\SU(4)_V\times\U(1)_L\times\U(1)_R,
& \text{if } \mathscr{A}=\R\,\1_4, \\[4pt]
\U(2)_V\times\SU(2)_L\times\SU(2)_R,
& \text{if } \mathscr{A}=\Span_{\R}\{\sigma_i\otimes\1_2\}_{i=1}^3, \\[4pt]
\big(\U(3)\times\U(1)\big)_V\times\U(1)_A,
& \text{if } \mathscr{A}=\R\,Q,\quad Q=\text{diag}(\1_3,-1), \\[4pt]
\U(1)_V\times\SU(4),
& \text{if } \mathscr{A}=\Span_{\R}\{\Gamma_1,\ldots,\Gamma_5\} \:.
\end{array}
\right. \]

\Thanks{{{\em{Acknowledgments:}}
We would like to thank Niky Kamran for helpful discussions.
We are grateful to the ``Universit\"atsstiftung Hans Vielberth'' for support.
}}

\bibliographystyle{amsplain}
\providecommand{\bysame}{\leavevmode\hbox to3em{\hrulefill}\thinspace}
\providecommand{\MR}{\relax\ifhmode\unskip\space\fi MR }
\providecommand{\MRhref}[2]{%
  \href{http://www.ams.org/mathscinet-getitem?mr=#1}{#2}
}
\providecommand{\href}[2]{#2}

\end{document}